%% file: main.tex
\documentclass[12pt,a4paper]{article}

\PassOptionsToPackage{dvipsnames}{xcolor}

\usepackage{flonat-wp}
\usepackage[harvard]{flonat-bib}
\usepackage{xurl}
\hypersetup{hypertexnames=false}

\providecommand{\M}{\mathcal{M}}
\providecommand{\C}{\mathcal{C}}
\providecommand{\U}{\mathcal{U}}
\providecommand{\ind}{\mathbbm{1}}
\DeclareMathOperator{\Sh}{Sh}

\providecommand{\new}[1]{#1}
\providecommand{\removed}[1]{}

\title{Quotient Semivalues for False-Name-Resistant Data Attribution}

\author{%
  Florian A.\ D.\ Burnat\thanks{School of Management, University of Bath, UK. \texttt{fadb20@bath.ac.uk}}%
  \and%
  Brittany I.\ Davidson\thanks{School of Management, University of Bath, UK. \texttt{bid23@bath.ac.uk}}%
}

\date{29 July 2026}

\begin{document}

\begin{titlepage}
    \maketitle
    \begin{abstract}
        \noindent
Data valuation methods allocate payments and audit training data's contribution to machine-learning pipelines; however, they often assume passive contributors. In reality, contributors can split datasets across pseudonymous identities, duplicate high-value examples, create near-duplicates, or launder synthetic variants to inflate their share. We formalize this as false-name manipulation in ML data attribution. Our main construction is the \emph{quotient semivalue} mechanism: compute Shapley-, Banzhaf-, or Beta-style values over evidence-backed attribution clusters instead of raw identities, using a canonical-representative operator to absorb within-cluster duplication. We prove an impossibility: on a fixed monotone data-value game, exact Shapley-fair attribution over reported identities is incompatible with unrestricted false-name-proofness, even on binary-valued instances, and characterize the split-gain of a general semivalue on a unanimity counter-example. The mechanism is exactly false-name-proof under two structural conditions: \emph{false-name-neutral} within-cluster allocation and \emph{quotient-stable} manipulations. Under imperfect provenance, manipulation gain is bounded by allocation leakage, escaped-cluster mass, matched-cluster semivalue drift, and value-estimation error; cluster-level fairness loss is bounded separately by quotient-game distance. We instantiate the mechanisms in \textsc{DataMarket-Gym}, a benchmark for attribution under strategic provider attacks. On synthetic classification tasks, quotient semivalues with example-level evidence reduce manipulation gain on duplicate and near-duplicate Sybil attacks from $1.74$ under baseline Shapley to $0.96$, near the honest level. The cosine-threshold and (false-merge, false-split) rate sweeps trace the corresponding fairness--Sybil frontier. Image (CIFAR-10, frozen ResNet-18) and text (AG~News, MiniLM) experiments show the mechanism's behavior in real domains and reveal a domain-specific threshold reversal driven by embedding-space scale.

        \bigskip
        {\small\noindent\textbf{Keywords:} data valuation, Shapley value, mechanism design, false-name-proofness, Sybil attacks, machine learning}
    \end{abstract}
    \setcounter{page}{0}
    \thispagestyle{empty}
\end{titlepage}

\clearpage
\onehalfspacing
\setlength{\emergencystretch}{2em}

\input{sections/01-introduction}
\input{sections/02-related-work}
\input{sections/03-strategic-model}
\input{sections/04-shapley-limits}
\input{sections/05-quotient-mechanisms}
\input{sections/06-guarantees}
\input{sections/07-experiments}
\input{sections/08-discussion}

\clearpage
\printbibliography

\appendix
\input{sections/appendix-a-proofs}
\input{sections/appendix-b-algorithms}
\input{sections/appendix-c-benchmark}
\input{sections/appendix-d-allocation}
\input{sections/appendix-e-results}
\input{sections/appendix-f-reproducibility}

\end{document}

%% file: sections/01-introduction.tex
\section{Introduction}\label{sec:intro}

A payment mechanism for training data does not merely measure value; it creates incentives about what data is submitted, how it is partitioned, and under whose identity it appears. The standard experimental setting for data valuation hides this: a fixed benchmark dataset, fixed unit decomposition, and fixed player set; however, in a real training-data market, the unit of attribution is endogenous. A provider who owns one dataset can submit it once, split it across accounts, duplicate selected examples, produce near-duplicate paraphrases or augmentations, or curate examples to exploit the evaluator's validation set. Once attribution scores determine money, access, reputation, or licensing credit, providers have reason to optimize the attribution mechanism rather than the social value of their data.

We study this at the interface of ML data valuation and false-name-proof mechanism design. Data Shapley \parencite{Ghorbani2019-vs} and its successors --- more efficient \parencite{Jia2019-ku}, more stable under noisy utility \parencite{Kwon2022-yc,Wang2023-cl}, and connected to the core \parencite{Yan2021-zs} --- divide the value among fixed, passive units. In contrast, false-name manipulation is the classical failure mode in combinatorial auctions and distributed systems \parencite{Douceur2002-ok,Yokoo2004-do,Conitzer2010-is}, in which a single agent benefits by entering under multiple identities. Data markets inherit this with the additional complication that the object is non-rival and that learning utility is non-additive in providers \parencite{Agarwal2019-jb,Acemoglu2022-ck}. The combination is the central claim: \emph{Shapley-fair data attribution is not automatically incentive-compatible}. If the reported game changes when a provider splits, the sum of pseudonym Shapley values can exceed the honest provider's Shapley value --- a structural issue stemming from the same complementarity that makes data valuable.

We make four contributions: (i) a model formalizing data-provider manipulation as mechanism design over latent providers, submitted identities, learning utility, and manipulation classes; (ii) an impossibility --- no mechanism is exactly Shapley-fair and unrestricted false-name-proof; (iii) \emph{quotient semivalue} mechanisms that compute semivalues over evidence-backed clusters with a canonical-representative operator, with manipulation gain decomposed into allocation leakage, escaped mass, matched-cluster semivalue drift, and estimation error; and (iv) \textsc{DataMarket-Gym}, a benchmark covering exact-duplicate, near-duplicate, synthetic-laundering, Sybil-splitting, strategic-curation, and poisoning attacks scored by utility, fairness loss, manipulation gain, and runtime.

The positive claim is conditional but useful: when similarity and provenance signals can reliably collapse manipulated variants into the same attribution unit, semivalue payment recovers a controlled trade-off between Shapley-style fairness and Sybil resistance.

%% file: sections/02-related-work.tex
\section{Related work}\label{sec:related}

\paragraph{Data valuation and semivalues.}
Data Shapley \parencite{Ghorbani2019-vs} treats a datum's value as its expected marginal contribution to model performance over random training subsets; subsequent work has made the estimator more efficient \parencite{Jia2019-ku}, more stable under noisy utilities \parencite{Kwon2022-yc,Wang2023-cl}, and connected it to other cooperative game allocations, such as the core \parencite{Yan2021-zs}; the survey by \textcite{Rozemberczki2022-bn} places these methods alongside SHAP-style explanations \parencite{Lundberg2017-kj}. All assume that the data units are fixed before valuation begins. Volume-based valuation is replication-robust by construction \parencite{Xu2021-ul}, but does not address pure splits or paraphrase attacks. Our quotient construction is semivalue-agnostic and treats unit construction as part of the mechanism.

\paragraph{False-name-proofness and shell-company attacks.}False-name manipulation is a classic failure mode in combinatorial auctions \parencite{Yokoo2004-do,Conitzer2010-is}, and \textcite{Douceur2002-ok} argued identity multiplication cannot be prevented in distributed systems without a trusted authority or scarce resource. These insights apply to data markets, where submission value is non-rival, non-additive, and mediated by a learning algorithm. The closest work is the Faithful Group Shapley Value of \textcite{Lee2025-tt}, extending Shapley to provider groups and identifying \emph{shell-company attacks}, where a provider inflates group valuation by splitting data across auxiliary identities. The FGSV's faithfulness axiom defends against re-partitioning a fixed dataset; we study \emph{the construction of the attribution unit itself} when the manipulator can also alter data content — duplicating, paraphrasing, or synthesizing units. Our manipulation class covers near-duplicate and synthetic-variant attacks outside the FGSV data-fixed setting. The two are complementary: a deployed marketplace can use our cluster-construction layer to defeat content-level multiplication and FGSV-style axioms to defeat shell-company aggregation; we return to combinations in Sec.~\ref{sec:discussion}. \new{A related line is the Priority-Aware Shapley Value \parencite{Lee2026-jf}, which uses a supplied precedence structure together with soft priority weights and is therefore appropriate when lineage is reliable and originals and derivatives should retain separate attribution. Quotienting instead treats copies or sufficiently close variants as one strategic unit and is applicable when the direction of derivation is unavailable or untrusted. With trusted lineage, a possible combination is to form evidence-backed quotient units and apply PASV over those units using an inter-cluster precedence DAG. PASV does not itself validate a strategically reported DAG or its weights. The submitted experiments contain no PASV comparison, so we make no empirical superiority claim.}

\paragraph{Strategic data valuation and incentive-compatible pricing.}
\textcite{Zheng2025-rt} introduce data \emph{overvaluation} attacks in which clients exaggerate apparent value, showing that linear-style metrics (Shapley, LOO, Beta-Shapley, Banzhaf) are manipulable on that axis. Their concern is value misreporting at fixed identity; ours is identity manipulation at fixed value-reporting --- complementary attack surfaces. \textcite{Chen2026-qt} study truthful submission to a marketplace for mean estimation, exemplifying the IC-pricing line that distinguishes attribution fairness from cost-truthful pricing. Algorithmic data markets connect sellers and ML buyers \parencite{Agarwal2019-jb}, and data creates externalities that push market prices away from social value \parencite{Acemoglu2022-ck}. We focus on a narrow but practically important component: how to attribute and pay for training data when the submitted identity structure is manipulable.

%% file: sections/03-strategic-model.tex
\section{Strategic data-provider model}\label{sec:model}

Let $N\!=\![n]$ be latent providers. Each provider owns a multiset $D_i$ of \emph{units} drawn from a unit space $\mathcal{U}$; each unit $u$ carries a payload $\pi(u)\!\in\!\mathcal{X}\!\times\!\mathcal{Y}$ and optional metadata (id, source, timestamp). The mechanism observes a submitted profile $S\!=\!(S_1,\ldots,S_m)$ where each $S_j$ is a multiset of submitted units. Honest reporting means $m\!=\!n$ and $S_i\!=\!D_i$ as multisets; false-name reporting allows one latent provider to induce several identities, replicate units, and submit transformed copies. The learner $A$ consumes a multiset of payloads. The data-value game $([m],v_S)$ has $v_S(T)\!=\!U(A(\biguplus_{j\in T}\pi(S_j)))\!-\!U(A(\emptyset))$ where $\biguplus$ is the multiset union, $\pi(S_j)$ is the multiset of payloads in $S_j$ (duplicates preserved), $v_S(\emptyset)\!=\!0$, and (unless stated otherwise) $v_S(T)\!\in\![0,V]$. We use multisets rather than sets because replication attacks change $A$'s training input even when the unit-id set is unchanged. Monotone games suffice for the basic impossibility.

\begin{definition}[Attribution mechanism]\label{def:attribution-mechanism}
    An attribution mechanism $M$ maps every $([m],v_S)$ to payments $M(S)=(p_1,\ldots,p_m)\in\R^m$. It is \emph{budget-balanced} if $\sum_j p_j=v_S([m])$ and \emph{individually rational} if $p_j\!\geq\!0$ for every identity with non-negative contribution.
\end{definition}

$M$ is \emph{Shapley-fair} on a class of reported games if $p_i(S)\!=\!\Sh_i([m],v_S)$ for every game in that class --- exact fairness on the \emph{reported} game, which need not be the right strategic object.

We focus on two manipulation classes that directly attack attribution units. A \emph{false-name split} (Definition~\ref{def:false-name-split}) replaces honest identity $i$ with $k\!\geq\!2$ submitted identities holding datasets that union to $D_i$ (\emph{pure partition} if disjoint, \emph{replication split} otherwise). A \emph{variant attack} (Definition~\ref{def:variant-attack}) submits transformations $T_\ell(D_i)$ --- copies, augmentations, paraphrases, back-translations, refactors, synthetic generations, and benchmark-aware curations --- under one or more identities.

\begin{definition}[False-name split]\label{def:false-name-split}
    Replace identity $i$ with $k\!\geq\!2$ submitted identities $i^1,\ldots,i^k$ holding multisets $S_{i^\ell}$ with $\biguplus_\ell S_{i^\ell}\!=\!D_i$. The split is a \emph{pure partition} when the supports of the $S_{i^\ell}$ are disjoint and a \emph{replication split} when the same unit (or its payload) appears under multiple pseudonyms.
\end{definition}

\begin{definition}[Variant attack]\label{def:variant-attack}
    Submit transformed multisets $T_\ell(D_i)$ under one or more identities. Each $T_\ell$ may replicate units (preserving payload), produce variants (transformed payloads), or both.
\end{definition}

Let $S^{i\to\alpha}$ denote the profile after manipulation $\alpha$ creates the identity set $I_i^\alpha$. The \emph{additive false-name gain} is $\Gamma_i(M,S,\alpha)\!=\!\sum_{j\in I_i^\alpha}p_j(S^{i\to\alpha})\!-\!p_i(S)$; the \emph{multiplicative gain} (when $p_i(S)\!>\!0$) is $G_i\!=\!\sum_j p_j(S^{i\to\alpha})/p_i(S)$.

\begin{definition}[False-name-proofness]\label{def:false-name-proofness}
    $M$ is false-name-proof against $\M$ if $\Gamma_i(M,S,\alpha)\!\leq\!0$ for every honest $S$, latent $i$, $\alpha\!\in\!\M$; $\epsilon$-false-name-proof if the bound is $\epsilon$.
\end{definition}

Two regimes inside $\M$ matter. \emph{Replication regimes} (variants, replication splits) leave shared examples or near-duplicates across identities, recoverable by example-level evidence (perceptual hashes, embedding similarity, paraphrase detectors). \emph{Pure-split regimes} partition $D_i$ disjointly, so no example-level signal links the pseudonyms; defending against pure splits needs either a latent-ownership oracle (Theorem~\ref{thm:exact}) or non-example-level evidence (account linkage, behavioral fingerprints, license metadata, KYC).

%% file: sections/04-shapley-limits.tex
\section{Limits of Shapley-fair attribution}\label{sec:limits}

Shapley values are attractive because they are efficient, symmetric, null-player respecting, and additive in a fixed cooperative game. False-name manipulation exploits the phrase ``fixed cooperative game.'' If one provider changes the player set by splitting into several identities, exact Shapley fairness on the new reported game can conflict with false-name resistance at the latent-provider level.

\begin{example}[A complementary split increases Shapley payment]\label{ex:split}
    Consider two latent providers, $A$ and $B$, with a monotone value function
    \[
        v(\emptyset)=v(\{A\})=v(\{B\})=0,\qquad v(\{A,B\})=1.
    \]
    This is a two-player unanimity game, so $\Sh_A(v)=\Sh_B(v)=1/2$. Now suppose provider $A$ splits its dataset into two pseudonymous identities, $A_1$ and $A_2$, such that value is created only when $A_1$, $A_2$, and $B$ are all present:
    \[
        v'(T)=\ind\{\{A_1,A_2,B\}\subseteq T\}.
    \]
    This is a three-player unanimity game. Each submitted identity has Shapley value $1/3$. The latent provider $A$ now receives
    \[
        \Sh_{A_1}(v')+\Sh_{A_2}(v')=2/3>1/2=\Sh_A(v).
    \]
    Thus exact Shapley attribution over submitted identities is not false-name-proof.
\end{example}

Complementarity is natural in ML: one provider may own two rare classes, two halves of a translation pair, or two slices valuable only jointly. Splitting them creates more pivotal identities and increases the total Shapley payment.

\begin{theorem}[Exact Shapley fairness is incompatible with unrestricted false-name-proofness]
\label{thm:impossibility}
    In the class of all finite monotone data-value games, no attribution mechanism can satisfy both of the following properties:
    \begin{enumerate}[leftmargin=*,itemsep=2pt]
        \item \emph{Exact reported-game Shapley fairness:} for every submitted game $([m],v_S)$ and every submitted identity $j\in[m]$, $p_j(S)=\Sh_j(v_S)$.
        \item \emph{Unrestricted false-name-proofness:} no latent provider can increase total payment by replacing one honest identity with finitely many submitted identities.
    \end{enumerate}
    The conflict holds even when all value functions are monotonic and take values in $\{0,1\}$.
\end{theorem}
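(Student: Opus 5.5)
The plan is to argue by contradiction using the instance of Example~\ref{ex:split}, after first checking that it fits the data-value model of Section~\ref{sec:model}. Suppose a mechanism $M$ satisfies property~1. Then on every reported game its payments are pinned down: they must equal the Shapley values of that reported game. So it suffices to exhibit one honest profile $S$, one latent provider $i$, and one split $\alpha$ (a finite replacement of $i$ by several identities) such that
\[
\Gamma_i(M,S,\alpha)=\sum_{j\in I_i^\alpha}\Sh_j(v_{S^{i\to\alpha}})-\Sh_i(v_S)>0 .
\]
This violates property~2.

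The construction runs as follows.
\begin{enumerate}
\item \textbf{Realizing the games.} Take latent providers $A,B$ with $D_A=\{a_1,a_2\}$ and $D_B=\{b\}$. Choose the learner and utility so that value $1$ is produced exactly when the training multiset contains the payloads of $a_1$, $a_2$ and $b$, and value $0$ otherwise. Such a $U\circ A$ exists because the model lets $U(A(\cdot))$ be an arbitrary function of the training multiset. For instance, take a classifier that reaches full validation accuracy only when it sees all three examples (say, two rare-class examples plus a complementary slice), normalized so that $v(\emptyset)=0$.
\item \textbf{Honest game.} The honest report is $S=(D_A,D_B)$. It induces $v_S=\ind\{\{A,B\}\subseteq T\}$, a two-player unanimity game that is monotone and $\{0,1\}$-valued. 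By symmetry and efficiency, $\Sh_A(v_S)=1/2$.
\item \textbf{Split game.} The split $\alpha$ is a pure partition $S_{A^1}=\{a_1\}$, $S_{A^2}=\{a_2\}$. This is a legitimate false-name split in the sense of Definition~\ref{def:false-name-split}, since the multiset union recovers $D_A$. The induced game is the three-player unanimity game $v'$, again monotone and $\{0,1\}$-valued. The three players are symmetric, so efficiency gives each of them Shapley value $1/3$.
\item \textbf{Conclusion.} Property~1 forces $p_{A^1}+p_{A^2}=2/3$, while $p_A(S)=1/2$. Hence $\Gamma_A=1/6>0$, and $M$ is not false-name-proof. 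Both games lie in the stated class, so the conflict persists even under monotone binary valuations.
\end{enumerate}

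The only step needing care is the first one: justifying that the abstract unanimity functions arise as genuine data-value games $v_S$ from a single fixed learner and utility, consistently across both profiles. The point is that the same $U\circ A$ must generate $v_S$ and $v'$, with the split changing only how the payloads are grouped into identities. I would handle this by defining $U(A(M))=\ind\{\{\pi(a_1),\pi(a_2),\pi(b)\}\subseteq M\}$ directly, which the paper's model permits since it imposes no restriction on $U$ beyond boundedness and, here, monotonicity. Both reported games are then just restrictions of this one multiset function to different groupings. As an optional remark, the same argument generalizes: splitting $A$ into $k$ pieces yields $k/(k+1)$, so the gain approaches $1/2$.
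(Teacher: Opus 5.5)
Your proof is correct and follows the paper's argument. You apply exact reported-game Shapley fairness to the two-player unanimity game of Example~\ref{ex:split} and to its three-player split, which gives the latent provider $2/3>1/2$. Your explicit realization of both games from the single multiset utility $\ind\{\{\pi(a_1),\pi(a_2),\pi(b)\}\subseteq M\}$ with $D_A=\{a_1,a_2\}$ adds a detail the paper leaves implicit: it makes the split game genuinely induced by the same fixed learner and utility, rather than simply stipulated.
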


\begin{proof}[Proof sketch]
    Assume that a mechanism satisfies the exact reported-game Shapley fairness. Apply it to the honest unanimity game in Example~\ref{ex:split}; provider $A$ receives $1/2$. Apply the same mechanism to the split reported game; exact reported-game Shapley fairness requires payments $1/3,1/3,1/3$ to $A_1,A_2,B$. Because the two pseudonyms are controlled by the same latent provider, $A$ receives $2/3$ after splitting. This violates false-name-proofness. The games are monotone and binary-valued; therefore, the impossibility holds for this restricted class. A full proof formalizes the split operation and appears in App.~\ref{app:proofs}.
\end{proof}

The theorem is a boundary result, not a rejection of Shapley: Shapley fairness over raw submitted identities is the wrong primitive when identities are endogenous. The result is over identity manipulation on a \emph{fixed} game $v$; manipulating the utility function itself (re-weighting, altering the objective, changing coalition structure beyond identity multiplication) is out of scope and may relax the impossibility in either direction. Possibility returns with additional structure --- trusted identity, auditable provenance, similarity signals, or restrictions on transformations --- which the next section uses to define a quotient game with clusters as players.

The vulnerability extends beyond Shapley.

\begin{proposition}[Split-gain in unanimity games for a general semivalue]\label{prop:split-gain}
    Consider the two-player unanimity game $v$ on $\{A,B\}$ ($v(\{A,B\})\!=\!1$, zero on proper subsets). Let provider $A$ split into $k\!\geq\!2$ pseudonyms jointly holding $A$'s share, producing the $(k+1)$-player unanimity game $v'$. For any semivalue with weights $\omega$,\new{ the additive split-gain is}
    \[
        \Gamma(\omega,k) \;=\; k\,\omega_{k+1,k} - \omega_{2,1};
    \]
    multiplicatively, $G(\omega,k)\!=\!k\,\omega_{k+1,k}/\omega_{2,1}$ when $\omega_{2,1}\!>\!0$ (we report $G$ throughout the experiments and use $\Gamma$ only in the closed-form derivations). \new{A semivalue is split-vulnerable on this instance exactly when $\omega_{k+1,k}>\omega_{2,1}/k$.} The general $n$-player extension --- one of $n$ unanimity players splits into $k$ pseudonyms, producing the $(n\!+\!k\!-\!1)$-player unanimity game --- gives Shapley multiplicative split-gain $G_{\mathrm{Sh}}(n,k)\!=\!nk/(n\!+\!k\!-\!1)$, strictly above $1$ for $n,k\!\geq\!2$ and used for the closed-form predictions in App.~\ref{app:exp-extra} (S1). Proof in App.~\ref{app:proofs}.
\end{proposition}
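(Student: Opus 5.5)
The plan is to fix the semivalue convention and compute marginal contributions in unanimity games directly. Write a semivalue on an $m$-player game as $\phi_j(v)=\sum_{T\subseteq[m]\setminus\{j\}}\omega_{m,|T|+1}\,\bigl(v(T\cup\{j\})-v(T)\bigr)$. Here $\omega_{m,s}$ is the weight on a coalition of size $s$ that includes $j$, normalised so that $\sum_{s=1}^m\binom{m-1}{s-1}\omega_{m,s}=1$. This indexing matches the statement: Shapley has $\omega_{m,s}=\frac{(s-1)!(m-s)!}{m!}$, and Banzhaf has $\omega_{m,s}=2^{1-m}$.

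\textbf{Step 1.} For the $m$-player unanimity game $u_m(T)=\ind\{T=[m]\}$, the marginal contribution $u_m(T\cup\{j\})-u_m(T)$ is nonzero only when $T=[m]\setminus\{j\}$, where it equals $1$. Hence $\phi_j(u_m)=\omega_{m,m}$ for every $j$.

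\textbf{Step 2.} Apply Step 1 twice. With $m=2$, the honest payment to $A$ is $\omega_{2,2}$. With $m=k+1$ (identities $A_1,\dots,A_k,B$, where $v'$ is the $(k+1)$-unanimity game as in Example~\ref{ex:split}), each pseudonym receives $\omega_{k+1,k+1}$. Summing over the $k$ pseudonyms gives $\Gamma=k\,\omega_{k+1,k+1}-\omega_{2,2}$ and $G=k\,\omega_{k+1,k+1}/\omega_{2,2}$.

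The statement writes these as $\omega_{k+1,k}$ and $\omega_{2,1}$, i.e.\ it indexes by the number of \emph{other} players in the coalition: $\omega_{m,t}$ is the weight on $|T|=t$ with $T\not\ni j$. The first thing to do is state this convention explicitly, namely $\phi_j=\sum_T \omega_{m,|T|}\,\Delta_j(T)$ over $T\subseteq[m]\setminus\{j\}$. Under it the pivotal coalition has size $m-1$, which gives exactly $\omega_{k+1,k}$ and $\omega_{2,1}$. The vulnerability characterisation follows by rearranging: $\Gamma>0$ iff $\omega_{k+1,k}>\omega_{2,1}/k$.

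\textbf{Step 3.} For the $n$-player extension, apply Step 1 with Shapley weights. The pivotal weight is $\omega_{m,m-1}=\frac{(m-1)!\,0!}{m!}=1/m$, so the honest payment is $1/n$. After the split, each pseudonym receives $1/(n+k-1)$. The ratio is $G_{\mathrm{Sh}}=\frac{k/(n+k-1)}{1/n}=nk/(n+k-1)$. Finally, $G_{\mathrm{Sh}}>1$ iff $nk-n-k+1>0$ iff $(n-1)(k-1)>0$, which holds for $n,k\ge2$. The analogous general-semivalue formula $\Gamma=k\,\omega_{n+k-1,n+k-2}-\omega_{n,n-1}$ comes from the same computation.

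The main obstacle is not mathematical but definitional. I need to pin down the weight-indexing convention so the displayed formula is literally correct. I also need to justify that the split game is indeed the $(k+1)$-unanimity game. I would take this as part of the hypothesis ("producing the unanimity game $v'$"), exactly as in Example~\ref{ex:split} and Theorem~\ref{thm:impossibility}, rather than derive it from the data model.
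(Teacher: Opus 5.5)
Your proposal is correct and follows the paper's proof. Both note that in a unanimity game the only nonzero marginal contribution is to the coalition of all other players, which gives $\omega_{2,1}$ and $\omega_{k+1,k}$ under the convention of Section~\ref{sec:quotient} (weights indexed by $|Q|$ with $Q$ excluding the player, which is the one you settle on), and both substitute the Shapley pivotal weight $1/m$ to get $nk/(n+k-1)$; your $(n-1)(k-1)>0$ check also makes explicit the strict inequality that the paper only asserts.
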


\begin{corollary}[Specializations]\label{cor:split-gain-special}
\new{Shapley gives $\Gamma=(k-1)/(2(k+1))>0$ for all $k\geq2$, converging to $1/2$: exact Shapley fairness is split-vulnerable at every $k$. Raw Banzhaf gives $\Gamma=k\,2^{-k}-1/2$: zero at $k=2$ and \emph{negative} for $k\geq3$. Beta-Shapley interpolates continuously in its shape parameters.}
\end{corollary}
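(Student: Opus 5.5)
The corollary follows by substituting the Shapley and Banzhaf weights into the formula $\Gamma(\omega,k)=k\,\omega_{k+1,k}-\omega_{2,1}$ from Proposition~\ref{prop:split-gain}. I read $\omega_{n,s}$ as the weight the semivalue puts on a marginal contribution to a coalition of size $s-1$ (size $s$ including the player) in an $n$-player game. This is the convention under which a unanimity player's value is $\omega_{n,n}$. In an $n$-player unanimity game, a player's marginal contribution is nonzero only when it joins the coalition of all other players. So the player's semivalue equals the single weight attached to that coalition. This is the convention I take the proposition to use.

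For Shapley, the weight on a coalition $T$ not containing $i$ is $|T|!(n-|T|-1)!/n!$. For $T=N\setminus\{i\}$ this gives $(n-1)!\,0!/n! = 1/n$. Hence $\omega_{2,1}=1/2$ and $\omega_{k+1,k}=1/(k+1)$, and
\[
\Gamma=\frac{k}{k+1}-\frac12=\frac{k-1}{2(k+1)}.
\]
This is positive exactly when $k\geq 2$. It increases in $k$ and tends to $1/2$. The split-vulnerability criterion $\omega_{k+1,k}>\omega_{2,1}/k$ reads $1/(k+1)>1/(2k)$, which is equivalent to $k>1$. That confirms vulnerability at every $k$.

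For raw Banzhaf, every coalition has weight $2^{-(n-1)}$. So $\omega_{2,1}=1/2$ and $\omega_{k+1,k}=2^{-k}$, giving $\Gamma=k2^{-k}-1/2$. At $k=2$ this is $2/4-1/2=0$. For $k\geq3$, a short induction shows $k2^{-k}<1/2$: it holds at $k=3$ since $3/8<1/2$, and $(k+1)2^{-(k+1)}/(k2^{-k})=(k+1)/(2k)<1$ shows the sequence is decreasing. So $\Gamma<0$ for all $k\geq 3$.

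For Beta-Shapley, I would write $\omega_{n,s}$ as the Beta-binomial weight. It is proportional to $B(s-1+\beta,\,n-s+\alpha)/B(\alpha,\beta)$, up to the binomial normalisation; the exact orientation of $\alpha,\beta$ follows the paper's convention. This weight is a continuous function of $(\alpha,\beta)$ on the open quadrant, so $\Gamma$ is continuous in the shape parameters. At $\alpha=\beta=1$ the weights reduce to the Shapley weights. To justify the word ``interpolates'', I would note that suitable parameter limits push the weights toward coalitions of near-full size. There $\Gamma$ can be made large, just as it can be made negative in other limits. So $\Gamma$ varies continuously through the Shapley value.

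The only delicate step is this informal Beta-Shapley claim. Pinning down the normalisation convention (raw versus normalised Beta weights) is the main obstacle. I would handle it by stating only continuity in $(\alpha,\beta)$ and recovery of Shapley at $(1,1)$, rather than an exact range.
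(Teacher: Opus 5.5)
Your proof is correct and follows the paper's own route: substitute $\omega_{2,1}=1/2$ together with $\omega_{k+1,k}=1/(k+1)$ (Shapley) or $\omega_{k+1,k}=2^{-k}$ (raw Banzhaf) into $\Gamma=k\,\omega_{k+1,k}-\omega_{2,1}$, check signs and monotonicity, and treat Beta-Shapley only informally (the paper itself just says the substitution is direct). One notational slip: in the paper the second index of $\omega_{n,s}$ is the size of the coalition \emph{excluding} the player, so a unanimity player's value is $\omega_{n,n-1}$ rather than $\omega_{n,n}$. Your computations already use $\omega_{n,n-1}$, and your Beta-Shapley weight should be re-indexed accordingly to $B(s+\beta,\,n-1-s+\alpha)/B(\alpha,\beta)$.
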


\begin{remark}[The raw-Banzhaf boundary point]\label{rem:banzhaf-boundary}
\new{The negative raw-Banzhaf \emph{additive split-gain} at $k\geq3$ is the boundary case of the characterization, not a claim that raw Banzhaf scores are negative: in this monotone game the scores are non-negative, but their total falls after the split. Raw Banzhaf therefore disincentivizes pure splitting here because its weights fail $\omega_{k+1,k}>\omega_{2,1}/k$. The impossibility of Theorem~\ref{thm:impossibility} concerns exact \emph{Shapley} fairness. Normalization can change the split-gain sign, so raw and normalized Banzhaf must be labeled separately; the benchmark columns reported here are raw unless explicitly stated otherwise.}
\end{remark}

Proposition~\ref{prop:split-gain} sharpens Theorem~\ref{thm:impossibility}: \removed{the counterexample is not Shapley-specific}\new{the split-gain formula applies to a general semivalue, while its sign depends on the weights}, and the formula gives a clean diagnostic for the experiments (Sec.~\ref{sec:experiments}, S1).\removed{ The benchmark uses normalized Banzhaf; raw Banzhaf appears only as a sanity-check column.}

\begin{remark}[Manipulation costs]\label{rem:costs}
\new{The split-gain formulas assume costless manipulation. With a per-identifier cost $c_{\mathrm{id}}$, a split into $k$ pseudonyms is profitable only while $\Gamma(\omega,k) > c_{\mathrm{id}}(k-1)$. Under Shapley, the marginal gross benefit of the $(k+1)$st pseudonym is $(n-1)/[(n+k-1)(n+k)]$, yielding a finite stopping rule under any positive linear identity cost. For an attack family $\alpha(d)$ indexed by perturbation distance $d$ and incurring cost $c(d)$, the present theory yields only the net-gain upper bound $B_i^{\alpha(d)}-c(d)=A_i^{\alpha(d)}+L_i^{\alpha(d)}+D_i^{\alpha(d)}+2K_{\alpha(d)}\eta-c(d)$. Solving the attacker's best response requires an additional model linking perturbation distance to allocation leakage, escape probability, matched-cluster semivalue drift, and retained value; that equilibrium analysis is outside the present paper.}
\end{remark}

%% file: sections/05-quotient-mechanisms.tex
\section{Quotient semivalue mechanisms}\label{sec:quotient}

\subsection{Evidence graph and attribution clusters}

Let $\U\!=\!\{u_1,\ldots,u_r\}$ be submitted data units (granularity --- example, document, batch, file --- is a designer choice), each with submitted identity $o(u)\!\in\![m]$. The mechanism receives an evidence function $e_\theta(u,u')\!\in\!\{0,1\}$ indicating whether two units should be treated as the same attribution object at threshold $\theta$ (combining hashes, perceptual hashes, embedding similarity, license metadata, watermarks, timestamps, or trusted provenance --- the theory conditions on error properties, not on the technology). Build the graph $G_\theta\!=\!(\U,E_\theta)$ with edges where $e_\theta\!=\!1$, and let $\C_\theta\!=\!\{C_1,\ldots,C_K\}$ be its connected components --- the mechanism's \emph{attribution clusters}.

Clustering identities are necessary but not sufficient: A manipulator who has placed ten copies of one example into a cluster can still shift the trained model if the learner re-weights that point ten times. The strategic unit is the identity; the operational unit consumed by the learner is the training set. We close the gap with a \emph{cluster representative} operator that canonicalizes each cluster's training units before the utility is evaluated.

\begin{definition}[Cluster representative]\label{def:representative}
    A cluster representative is a map $R_\theta:\,2^{\U}\to 2^{\U^{\star}}$ from clusters to (multisets of) canonical training units. Concrete instantiations include:
    \begin{enumerate}[leftmargin=*,itemsep=2pt]
        \item \emph{Exact-duplicate collapse:} replace identical units in $C_k$ by a single representative.
        \item \emph{Capped representative:} retain a fixed budget of $\kappa$ units per cluster (e.g., the centroid, medoid, or first $\kappa$ in submission order), discarding additional copies.
        \item \emph{Weight-normalized union:} use $\bigcup_{u\in C_k} u$ but rescale per-example training weight to a cluster-level budget that is independent of $|C_k|$.
        \item \emph{Provenance-based selection:} retain only units whose source ID is recorded as the cluster's canonical origin.
    \end{enumerate}
\end{definition}

\begin{definition}[Quotient game]\label{def:quotient-game}
    Given submitted units $\U$, clusters $\C_\theta=\{C_1,\ldots,C_K\}$, cluster representative $R_\theta$, learner $A$, and utility $U$, the quotient data-value game is $([K],\bar v_\theta)$, where
    \[
        \bar v_\theta(Q)=U\!\left(A\!\left(\bigcup_{k\in Q} R_\theta(C_k)\right)\right)-U\!\left(A(\emptyset)\right)
    \]
    for every coalition of clusters $Q\subseteq[K]$.
\end{definition}

The quotient operation changes the strategic unit from a submitted identity to an evidence-backed cluster. It yields invariance only for manipulations that leave the evidence graph's cluster structure and the representative outputs unchanged; Definitions~\ref{def:quotient-stable} and~\ref{thm:approx} below make these conditions and their failure terms explicit.

\paragraph{Payment evaluation and deployment.}\new{In the experiments, $R_\theta(C_k)$ is used in the coalition utility calls that determine payments; we do not separately evaluate a final deployed model. The demonstrated guarantees are therefore payment guarantees. If a final model is trained on raw submissions, duplicate reweighting can remain even when payments are robust; removing that channel requires training on the same canonicalized or weight-normalized representation. Exact-duplicate collapse is appropriate for literal replication; weight-normalized union retains all content while fixing total cluster influence; capped or medoid representatives can change coalition utilities and thereby contribute to the matched-cluster semivalue-drift term $D_i^\alpha$; provenance-based selection is preferable when a trusted origin is available.}

\begin{remark}[Two evidence regimes]\label{rem:regimes}
$e_\theta$ is example-level, as it captures replication regimes (duplicates, near-duplicates, and paraphrases) but is silent on pure splits, in which no two units across identities are particularly close. Defending pure splits requires non-example-level evidence (account linkage, behavioral fingerprints, and trusted provenance); the theorems below are conditional on whichever evidence layer the designer adopts.
\end{remark}

\subsection{Semivalue payments and within-cluster allocation}

A semivalue assigns to cluster $k$ in the quotient game $\bar v_\theta$ a weighted average of marginal contributions $\varphi_k^\omega(\bar v_\theta)\!=\!\sum_{Q\subseteq[K]\setminus\{k\}}\omega_{K,|Q|}[\bar v_\theta(Q\cup\{k\})-\bar v_\theta(Q)]$, with weights $\omega_{K,s}\!\geq\!0$ satisfying $\sum_s\binom{K-1}{s}\omega_{K,s}\!=\!1$. Shapley weights are $1/(K\binom{K-1}{s})$, Banzhaf weights $2^{-(K-1)}$, Beta-Shapley uses a parametric coalition-size tilt \parencite{Kwon2022-yc}. Shapley weights are efficient ($\sum_k\!\varphi_k^\omega\!=\!\bar v_\theta([K])$, so payments are budget-balanced); Banzhaf and Beta are not. Rescaling their cluster scores to the grand-coalition value creates normalized payments, but this is a separate operation with extra stability requirements (Lemma~\ref{lem:normalized}). The benchmark's Banzhaf columns are raw unless explicitly labeled; raw and normalized split-gain signs can differ (Proposition~\ref{prop:split-gain}).

\begin{definition}[Quotient semivalue attribution]\label{def:qsv}
    Let $a_{i,k}\!\in\![0,1]$ be the share of cluster $C_k$ allocated to submitted identity $i$, with $\sum_i a_{i,k}\!=\!1$. The quotient semivalue payment is $p_i^{Q,\omega}(S)\!=\!\sum_k a_{i,k}\,\varphi_k^\omega(\bar v_\theta)$.
\end{definition}

\begin{assumption}[False-name-neutral within-cluster allocation]\label{ass:false-name-neutral}
    Let latent provider $i$ replace identity $i$ with pseudonyms $I_i^\alpha$, with pseudonym units possibly distributed across multiple clusters. Then for every cluster $k\in[K]$ the share sum is preserved:
    \[
        \sum_{j\in I_i^\alpha}a_{j,k}^\alpha=a_{i,k}\qquad\text{for every }k.
    \]
    The equality is required cluster by cluster for one common manipulated profile and allocation; it is not enough that each cluster could satisfy the equality under a different hypothetical allocation.
\end{assumption}

\paragraph{Within-cluster allocation is a second design axis.}
Assumption~\ref{ass:false-name-neutral} is sufficient for the zero-allocation-leakage special case but does not pin down the within-cluster rule. Three natural rules satisfy it under different conditions: \emph{equal-share} ($a_{j,k}\!=\!1/|I_k|$) is neutral only in single-latent clusters and fails in mixed clusters where a split into $h$ pseudonyms among $g$ other IDs raises the latent share from $1/(g+1)$ to $h/(g+h)$; \emph{count-based} over canonical units ($\propto |R_\theta(C_k)\cap S_j|$) is neutral only when canonical ownership counts are well defined and invariant to the attack, which requires a fixed rule for assigning a collapsed duplicate to an owner; and \emph{latent-share} is neutral when reliable ownership provenance is available. App.~\ref{app:allocation-rules} (Table~\ref{tab:allocation-rules}) gives the precise conditions and failure modes. Empirically (S5/S7), the three rules diverge by an order of magnitude under false merges: the equal-share gain rises to $2.46$ at $p_{\mathrm{fm}}\!=\!0.40$, and count-based allocation halves the false-merge effect in this benchmark. Count-based allocation is therefore our preferred baseline when canonical ownership counts are well defined and attack-invariant, not a universal deployment default. We use permutation/random-subset sampling on $K\!\ll\!r$ clusters (App.~\ref{app:algorithms}).

\paragraph{The role of Assumption~\ref{ass:false-name-neutral}.}\new{Assumption~\ref{ass:false-name-neutral} is a design condition on the within-cluster rule, not a restatement of the conclusion. The quotient construction makes cluster-level invariance \emph{testable}: a manipulation leaves semivalues unchanged when the evidence graph and representative operator leave the quotient game unchanged. Quotienting alone does not guarantee that condition; escaped clusters and matched-cluster semivalue drift measure its failure in Theorem~\ref{thm:approx}. Even when cluster values are stable, the allocation layer can leak share under identity multiplication. Assumption~\ref{ass:false-name-neutral} sets that leakage to zero, while the general theorem retains it explicitly as $A_i^\alpha$. No analogous allocation repair exists for semivalues computed over raw identities, where splitting changes the player set and game before allocation (Theorem~\ref{thm:impossibility}, Proposition~\ref{prop:split-gain}).}

\paragraph{Scaling when quotienting does not compress.}\new{When most honest units are unique, $K\approx r$, quotient valuation can be substantially more expensive than Shapley over the $m$ submitted providers, and clustering adds overhead. For permutation Shapley, $R=O(V^2\log(K/\delta)/\eta^2)$ permutations give a uniform per-cluster error target up to constants, with $O(RK)$ coalition-prefix evaluations. Random-subset sampling is the corresponding estimator for raw Banzhaf, followed by normalization when scores are used as payments. Our experiments operate where evidence aggregation keeps $K$ small; they do not establish scalability at $K\approx r$. Coarser attribution units, hierarchical valuation, or utility surrogates would be needed in that regime.}

%% file: sections/06-guarantees.tex
\section{Guarantees}\label{sec:guarantees}

A two-tier hierarchy. Theorem~\ref{thm:approx} bounds manipulation gain through four distinct channels: allocation leakage $A_i^\alpha$, escaped-cluster mass $L_i^\alpha$, matched-cluster semivalue drift $D_i^\alpha$, and estimator error $\eta$. Theorem~\ref{thm:fairness} separately bounds cluster-level fairness loss by the quotient-game distance $\Delta_\theta$. Cosine and noisy-evidence mechanisms can be evaluated under the general bound when the stated cluster matching exists; rules satisfying Assumption~\ref{ass:false-name-neutral} set only $A_i^\alpha$ to zero. Theorem~\ref{thm:exact} is the zero-leakage case.

\begin{definition}[Quotient-stable manipulation]\label{def:quotient-stable}
    Manipulation $\alpha$ by latent provider $i$ is \emph{quotient-stable} (with respect to evidence $e_\theta$ and representative $R_\theta$) if it leaves both (a)~the set of quotient clusters $\C_\theta$ unchanged up to relabeling of cluster IDs, and (b)~each cluster representative $R_\theta(C_k)$ unchanged as a multiset of canonical training units. Pure splits and replication attacks, whose variants all fall into existing honest clusters and are absorbed by $R_\theta$ are quotient-stable; attacks that produce variants escaping into new clusters or that change a cluster's canonical representative are not.
\end{definition}

\begin{theorem}[Exact false-name-proofness under quotient-stability]\label{thm:exact}
    Suppose that manipulation $\alpha$ by latent provider $i$ is quotient-stable (Definition~\ref{def:quotient-stable}) and the within-cluster allocation rule is false-name-neutral. Then, for any semivalue weights $\omega$ and any value function $\bar v_\theta$, exact quotient semivalue attribution satisfies $\Gamma_i(M^{Q,\omega},S,\alpha)=0$, that is, the mechanism is exactly false-name-proof against $\alpha$.
\end{theorem}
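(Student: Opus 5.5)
The proof goes by direct computation, in three steps: the quotient game is invariant under $\alpha$, hence so are the cluster semivalues, and false-name neutrality then makes the total payment to the pseudonyms telescope to $p_i(S)$.

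\textbf{Step 1: the quotient game is invariant.} By Definition~\ref{def:quotient-stable}(a), there is a bijection $\sigma:[K]\to[K^\alpha]$ matching each honest cluster $C_k$ to a manipulated cluster $C^\alpha_{\sigma(k)}$. In particular $K^\alpha=K$. By part (b), $R_\theta(C^\alpha_{\sigma(k)})=R_\theta(C_k)$ as multisets of canonical training units. Definition~\ref{def:quotient-game} then gives, for every coalition $Q\subseteq[K]$,
\[
\bar v^\alpha_\theta(\sigma(Q))=U\!\left(A\!\left(\bigcup_{k\in Q}R_\theta(C_k)\right)\right)-U(A(\emptyset))=\bar v_\theta(Q).
\]
The learner and utility receive literally the same input multiset, so the equality holds exactly, not approximately. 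I would note explicitly that "$\bigcup$" here is the multiset union used throughout.

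\textbf{Step 2: the cluster semivalues are invariant.} Semivalues are symmetric under relabeling of players, and the weights $\omega_{K,s}$ depend only on $K$ and the coalition size $s$. Since $K^\alpha=K$ and $|\sigma(Q)|=|Q|$, we can reindex the sum defining $\varphi^\omega$ through $\sigma$ to get $\varphi^\omega_{\sigma(k)}(\bar v^\alpha_\theta)=\varphi^\omega_k(\bar v_\theta)$ for every $k$. Nothing in this step uses efficiency or any particular choice of weights, which is why the conclusion holds for any $\omega$ and any $\bar v_\theta$.

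\textbf{Step 3: telescope the payments.} By Definition~\ref{def:qsv}, and writing $a^\alpha_{j,k}$ for the share of matched cluster $\sigma(k)$,
\[
\sum_{j\in I_i^\alpha}p^{Q,\omega}_j(S^{i\to\alpha})=\sum_{j\in I_i^\alpha}\sum_k a^\alpha_{j,k}\,\varphi^\omega_{\sigma(k)}(\bar v^\alpha_\theta)=\sum_k\Bigl(\sum_{j\in I_i^\alpha}a^\alpha_{j,k}\Bigr)\varphi^\omega_k(\bar v_\theta).
\]
By Assumption~\ref{ass:false-name-neutral}, the inner sum equals $a_{i,k}$ for every $k$ simultaneously. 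The right-hand side is therefore $\sum_k a_{i,k}\varphi^\omega_k(\bar v_\theta)=p^{Q,\omega}_i(S)$, and so $\Gamma_i=0$. The payments to other identities are unchanged by the same argument, though the statement only needs the gain of $i$.

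\textbf{Main obstacle.} The delicate point is bookkeeping the cluster relabeling consistently with the allocation indices. Assumption~\ref{ass:false-name-neutral} is stated per cluster index $k$, so it must be read through the matching $\sigma$: the share sum over pseudonyms in $C^\alpha_{\sigma(k)}$ equals $i$'s share in $C_k$. I would fix $\sigma$ once at the start, use it both for the game isomorphism and for interpreting $a^\alpha_{j,k}$, and stress the assumption's requirement that a single common allocation works across all clusters. That requirement is exactly what licenses exchanging the two sums and substituting cluster by cluster.
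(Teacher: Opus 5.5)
Your proof is correct and is the paper's argument: quotient stability makes the honest and manipulated quotient games coincide, so the cluster semivalues agree; exchanging the order of summation and applying Assumption~\ref{ass:false-name-neutral} cluster by cluster then returns $p_i^{Q,\omega}(S)$. Your explicit bijection $\sigma$ and appeal to semivalue symmetry just make rigorous the label identification the paper leaves implicit. One side remark is overstated, though the theorem does not need it: neutrality fixes only the pseudonyms' aggregate share in each cluster (and hence, since shares sum to one, the other identities' aggregate share), so it does not imply that each other identity's individual payment is unchanged.
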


The theorem isolates the limit-case conditions: the evidence graph and representative operator keep the quotient game unchanged, while the allocator preserves the latent provider's aggregate share (App.~\ref{app:proofs}). A latent-truth oracle satisfies the first condition. Source ID evidence does so for attacks whose copies or variants retain a common source identifier; a disjoint pure split needs provider-, account-, or license-level linkage. Cosine evidence need not be quotient-stable because variants may escape or alter representatives.

To compare the honest and manipulated quotient games, let $H_i=\{h\in[K]:a_{i,h}>0\}$ be the honest clusters that pay provider $i$. An \emph{admissible matching} is a bijection $\mu:M_i^\alpha\to H_i$ from selected attacker-containing clusters in the manipulated game to these honest clusters. Every other attacker-containing manipulated cluster is escaped and belongs to $E_i^\alpha$. Thus an honest payment term is used exactly once: when several manipulated clusters descend from one honest cluster, one may be matched and the others are escaped; a many-to-one correspondence is not silently reused. Manipulations that merge several of provider $i$'s honest clusters into one attacked cluster do not admit this matching and fall outside Theorem~\ref{thm:approx}.

For $k\in M_i^\alpha$, write $s_{i,k}^\alpha=\sum_{j\in I_i^\alpha}a_{j,k}^\alpha$ for the attacker's aggregate share. Define allocation leakage, escaped mass, and matched-cluster semivalue drift by
\[
\begin{aligned}
    A_i^\alpha&=\sum_{k\in M_i^\alpha}\left|s_{i,k}^\alpha-a_{i,\mu(k)}\right|\left|\varphi_k^\omega(\bar v_\theta^\alpha)\right|,\\
    L_i^\alpha&=\sum_{k\in E_i^\alpha}\left|\varphi_k^\omega(\bar v_\theta^{\alpha})\right|,\\
    D_i^\alpha&=\sum_{k\in M_i^\alpha}\left|\varphi_k^\omega(\bar v_\theta^{\alpha})-\varphi_{\mu(k)}^\omega(\bar v_\theta)\right|.
\end{aligned}
\]
$A_i^\alpha$ measures within-cluster share creation, $L_i^\alpha$ measures value presented as new attribution units, and $D_i^\alpha$ measures semivalue drift in the matched comparison.

\begin{theorem}[Approximate false-name-proofness under evidence and allocation leakage]\label{thm:approx}
    Suppose an admissible matching $\mu$ exists. Let the honest and manipulated mechanisms both use semivalue estimates, with $|\widehat\varphi_h-\varphi_h|\!\leq\!\eta$ for every honest cluster and $|\widehat\varphi_k^\alpha-\varphi_k^\alpha|\!\leq\!\eta$ for every manipulated cluster. Then
    \[
        \left|\sum_{j\in I_i^\alpha}\widehat p_j^{Q,\omega}(S^\alpha)-\widehat p_i^{Q,\omega}(S)\right|\;\leq\; B_i^\alpha:=A_i^\alpha+L_i^\alpha+D_i^\alpha+2K_\alpha\eta,
    \]
    and hence $\Gamma_i(\widehat M^{Q,\omega},S,\alpha)\leq B_i^\alpha$, where $K_\alpha$ is the manipulated cluster count. If instead both sets of estimates satisfy $\E|\widehat\varphi-\varphi|\leq\eta$, then $\E|\sum_j\widehat p_j^\alpha-\widehat p_i|\leq B_i^\alpha$ and therefore $\E[\Gamma_i]\leq B_i^\alpha$. For permutation Shapley, when each sampled marginal contribution has range width $O(V)$, $O(V^2\log((K+K_\alpha)/\delta)/\eta^2)$ samples per cluster give a uniform high-probability error event, while the mean-absolute-error regime needs $O(V^2/\eta^2)$. In particular, $A_i^\alpha+L_i^\alpha+D_i^\alpha\leq\epsilon$ and $\eta\leq\epsilon/(2K_\alpha)$ give $2\epsilon$-false-name-proofness.
\end{theorem}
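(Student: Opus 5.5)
The plan is a telescoping decomposition of the payment difference, applied first to exact semivalues and then to estimates. First I write both totals as cluster sums. By Definition~\ref{def:qsv}, the attacker's manipulated total is
\[
\sum_{j\in I_i^\alpha}\widehat p_j(S^\alpha)=\sum_{k\in[K_\alpha]} s_{i,k}^\alpha\,\widehat\varphi_k^\alpha,
\qquad s_{i,k}^\alpha=\sum_{j\in I_i^\alpha}a^\alpha_{j,k}.
\]
Clusters containing no attacker units have $s^\alpha_{i,k}=0$. The honest total is $\widehat p_i(S)=\sum_{h\in H_i}a_{i,h}\widehat\varphi_h$, because $a_{i,h}=0$ off $H_i$. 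The attacker-containing clusters split as $M_i^\alpha\sqcup E_i^\alpha$. Since $\mu$ is a bijection onto $H_i$, the honest sum can be reindexed as $\sum_{k\in M_i^\alpha}a_{i,\mu(k)}\widehat\varphi_{\mu(k)}$. This reindexing is the only place admissibility is used: each honest term is consumed exactly once.

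Next I bound the difference. It equals
\[
\sum_{k\in E_i^\alpha}s^\alpha_{i,k}\widehat\varphi^\alpha_k+\sum_{k\in M_i^\alpha}\bigl(s^\alpha_{i,k}\widehat\varphi^\alpha_k-a_{i,\mu(k)}\widehat\varphi_{\mu(k)}\bigr).
\]
For each matched $k$ I add and subtract $a_{i,\mu(k)}\varphi^\alpha_k$ and pass from estimates to exact values. This splits the matched term into four pieces:
\[
(s^\alpha_{i,k}-a_{i,\mu(k)})\varphi^\alpha_k
+a_{i,\mu(k)}(\varphi^\alpha_k-\varphi_{\mu(k)})
+s^\alpha_{i,k}(\widehat\varphi^\alpha_k-\varphi^\alpha_k)
-a_{i,\mu(k)}(\widehat\varphi_{\mu(k)}-\varphi_{\mu(k)}).
\]
I then take absolute values and use $s^\alpha_{i,k}\le 1$ and $a_{i,\mu(k)}\le 1$. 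The first piece sums to $A_i^\alpha$ and the second to at most $D_i^\alpha$. For escaped clusters, $|s^\alpha_{i,k}\varphi^\alpha_k|\le|\varphi^\alpha_k|$ sums to $L_i^\alpha$, plus one estimation error each.

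The estimation errors need care to land on the constant $2K_\alpha\eta$. Each manipulated cluster in $M_i^\alpha\cup E_i^\alpha$ contributes at most $\eta$, for a total of at most $K_\alpha\eta$. Each honest cluster in $H_i$ contributes at most $\eta$. Because $\mu$ is a bijection, $|H_i|=|M_i^\alpha|\le K_\alpha$, so this is also at most $K_\alpha\eta$. This is the step where indexing honest clusters through $\mu$, rather than by $K$, matters.

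The expectation version follows by the same pathwise inequality with $\E$ applied term by term. All weights $s,a\le1$ are deterministic given the profiles, so linearity gives $\E|\cdot|\le B_i^\alpha$. Since $\Gamma_i$ is bounded by the absolute difference, $\Gamma_i\le B_i^\alpha$ follows. For the sample complexity, a permutation-Shapley estimate is an average of marginal contributions with range $O(V)$. Hoeffding plus a union bound over the $K+K_\alpha$ clusters gives the uniform high-probability statement. Chebyshev, or the variance bound with Jensen, gives the $O(V^2/\eta^2)$ mean-absolute-error rate. The final corollary is arithmetic: $\epsilon+2K_\alpha\cdot\epsilon/(2K_\alpha)=2\epsilon$.

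The main obstacle is bookkeeping. I must ensure the matched and escaped partition covers every attacker-containing cluster and that shares bounded by one make the cross terms collapse exactly onto the defined $A$, $D$, $L$. Otherwise an unmatched honest cluster would leave an uncontrolled term.
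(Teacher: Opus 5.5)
Your proposal is correct and matches the paper's proof essentially step for step. It uses the same reindexing of the honest payment through the bijection $\mu$ and the same add-and-subtract split of each matched term into a share-leakage piece, a drift piece and two estimation-error pieces. Your error count $|M_i^\alpha|+|E_i^\alpha|+|H_i|\le 2K_\alpha$ is the paper's $2|M_i^\alpha|+|E_i^\alpha|\le 2K_\alpha$ rewritten, and your justification of the $O(V^2/\eta^2)$ mean-absolute-error rate via $\E|\widehat\varphi-\varphi|\le\sqrt{\mathrm{Var}(\widehat\varphi)}$ fills in a step the paper only asserts.
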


Assumption~\ref{ass:false-name-neutral} on every matched pair gives $A_i^\alpha=0$. Together with quotient stability, it also gives $L_i^\alpha=D_i^\alpha=0$; exact evaluation sets $\eta=0$, recovering Theorem~\ref{thm:exact}. Equal-share in false-merged clusters need not have $A_i^\alpha=0$, so those experiments are interpreted through the general bound rather than the zero-leakage special case.

\begin{remark}[When matched-cluster semivalue drift vanishes]\label{rem:drift-zero}
$D_i^\alpha\!=\!0$ when the entire honest and manipulated quotient games are equal up to the relabeling induced by $\mu$, as in the quotient-stable case. Equality of the representatives for matched clusters removes their direct local-representative change but is not sufficient: escaped clusters or changes elsewhere in the quotient game can still change a matched cluster's semivalue. Quantitative bounds based on medoid displacement or representative-weight perturbation therefore require an additional assumption translating those local changes into a uniform bound on coalition-value changes.
\end{remark}

A probabilistic-miss corollary (proof in App.~\ref{app:proofs}): if each manipulated variant escapes its honest cluster with probability at most $\delta$ and cluster marginal contributions are bounded by $B$, then $\E[L_i^\alpha]\leq \delta r_i B$. Bounding $D_i^\alpha$ analogously requires a global stability condition connecting each variant to the induced changes in coalition values and hence in matched-cluster semivalues.

\begin{theorem}[Fairness loss at the cluster level]\label{thm:fairness}
    Let $v$ be an honest data-value game over honest attribution units and $\bar v_\theta$ its quotient game under clustering map $q:\,\U\to[K]$ and representative $R_\theta$. For each cluster $k$, let $C^q_k=q^{-1}(k)$ denote the set of honest units mapped to cluster $k$, and define the merged honest game on player set $[K]$ by $v^{\mathrm{merge}}(Q):=v(\bigcup_{\ell\in Q}C^q_\ell)$. Suppose for all coalitions $Q\subseteq[K]$,
    \[
        \left|v^{\mathrm{merge}}(Q) - \bar v_\theta(Q)\right|\leq \Delta_\theta.
    \]
    Then for any semivalue weights $\omega$, the cluster-level fairness loss is bounded:
    \[
        \left|\varphi_k^\omega(v^{\mathrm{merge}}) - \varphi_k^\omega(\bar v_\theta)\right|\leq 2\Delta_\theta
        \quad\text{for every cluster } k.
    \]
\end{theorem}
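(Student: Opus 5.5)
The plan is to use linearity of a semivalue in the game, together with the normalization of its weights. Fix a cluster $k$ and write $w := v^{\mathrm{merge}} - \bar v_\theta$. This is a game on $[K]$ with $|w(Q)|\leq\Delta_\theta$ for every $Q\subseteq[K]$, including $Q=\emptyset$. The empty coalition needs no separate care: both games vanish there, since $v(\emptyset)=0$ and $\bar v_\theta(\emptyset)=U(A(\emptyset))-U(A(\emptyset))=0$, and in any case the bound only requires $|w(\emptyset)|\leq\Delta_\theta$. The semivalue formula
\[
\varphi_k^\omega(v)=\sum_{Q\subseteq[K]\setminus\{k\}}\omega_{K,|Q|}\bigl[v(Q\cup\{k\})-v(Q)\bigr]
\]
is linear in $v$, so $\varphi_k^\omega(v^{\mathrm{merge}})-\varphi_k^\omega(\bar v_\theta)=\varphi_k^\omega(w)$. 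It therefore suffices to prove $|\varphi_k^\omega(w)|\leq 2\Delta_\theta$.

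For each $Q\not\ni k$ the marginal contribution satisfies
\[
|w(Q\cup\{k\})-w(Q)|\leq |w(Q\cup\{k\})|+|w(Q)|\leq 2\Delta_\theta .
\]
The weights are nonnegative, so the triangle inequality gives
\[
|\varphi_k^\omega(w)|\leq 2\Delta_\theta\sum_{Q\subseteq[K]\setminus\{k\}}\omega_{K,|Q|}.
\]
Grouping the sum by $s=|Q|$ turns it into $\sum_{s=0}^{K-1}\binom{K-1}{s}\omega_{K,s}$, which equals $1$ by the normalization imposed on semivalue weights in Section~\ref{sec:quotient}. This yields the stated bound, uniformly in $k$ and in $\omega$.

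There is no real obstacle here. The one point to state explicitly is that the merged game and the quotient game live on the same player set $[K]$ and are compared with the same weights $\omega_{K,\cdot}$. That is exactly why the theorem defines $v^{\mathrm{merge}}$ through $q$, rather than comparing against the honest game on its original units. Once this is noted, linearity and normalization finish the argument. I would also remark that the constant $2$ comes from the triangle inequality on the two terms of each marginal contribution. If both games are normalized to $0$ at $\emptyset$ and one only needs Shapley, the bound cannot improve in general, because the adversarial $w$ with alternating signs on $Q\cup\{k\}$ and $Q$ attains it.
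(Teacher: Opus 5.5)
Your proof is correct and is essentially the paper's argument: you bound each difference of marginal contributions by $2\Delta_\theta$ via the triangle inequality, then use that the nonnegative weights $\omega_{K,|Q|}$ sum to $1$ over $Q\subseteq[K]\setminus\{k\}$; packaging the difference as $\varphi_k^\omega(w)$ by linearity is only cosmetic. One small correction to your closing sharpness remark: if both games vanish at $\emptyset$, then $w(\emptyset)=0$ and the $Q=\emptyset$ term contributes at most $\omega_{K,0}\Delta_\theta$, so for Shapley your alternating-sign $w$ gives only $(2-1/K)\Delta_\theta$ (it approaches $2\Delta_\theta$ as $K$ grows but does not attain it).
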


Theorem~\ref{thm:fairness} bounds cluster-level fairness loss; considered alongside Theorem~\ref{thm:approx}, it separates the fairness and robustness quantities that the empirical threshold analysis trades off. Individual-level fairness requires a within-cluster allocation rule. Theorems~\ref{thm:exact}--\ref{thm:fairness} concern raw semivalues. Lemma~\ref{lem:normalized} extends the manipulation bound to normalized payments only under explicit lower bounds on both raw-score sums and independent bounds on the raw-sum and grand-coalition-value changes. Without those stability conditions, normalization has no inherited guarantee. Finally, observationally identical reports do not reveal whether two accounts have common ownership; provenance is therefore needed to justify an ownership-sensitive allocation, but replication resistance and symmetry alone are not logically incompatible (App.~\ref{app:proofs}).

%% file: sections/07-experiments.tex
\section{Experiments}\label{sec:experiments}

We evaluate quotient semivalues on \textsc{DataMarket-Gym}, a benchmark for training-data markets with strategic providers (full specs in App.~\ref{app:gym}): a provider generator, an attack library, a controllable evidence layer, multiple valuation backends ($k$-NN, logreg, frozen ResNet/transformer features), and standard metrics (utility, manipulation gain, oracle-$L^1$ loss, rank stability, runtime). The experiments answer three questions: (\textbf{Q1}, Sec.~\ref{sec:exp-synthetic}) do quotient semivalues recover Theorem~\ref{thm:exact}'s predicted false-name-proofness on synthetic ground-truth, and do Theorems~\ref{thm:approx}--\ref{thm:fairness}'s bounds track empirical loss as evidence quality degrades; (\textbf{Q2}, Sec.~\ref{sec:exp-real}) does the mechanism transfer to real ML pipelines with frozen features and no oracle access; (\textbf{Q3}, Sec.~\ref{sec:exp-real}) can a usable cosine-threshold interval be diagnosed before deployment from embedding-pool geometry under a prespecified variant family, provider-partition model, graph construction, and mixed-component tolerance.

\subsection{Synthetic experiments}\label{sec:exp-synthetic}

The synthetic pipeline runs cooperative-game benchmarks (where exact theoretical predictions are available) and a logistic regression learner on a synthetic classification task ($8$ providers, $60$ examples each, $4$ classes, $24$ features, class separation $1.2$). In unanimity games, the closed-form Shapley and raw Banzhaf split-gain predictions of Proposition~\ref{prop:split-gain} match measurements to three decimal places at every $k\!\in\!\{2,\ldots,6\}$ (App.~\ref{app:exp-extra}, Table~\ref{tab:split-gain-validation}); the proposition is numerically validated. DGP robustness over $n_{\mathrm{providers}}\!\in\!\{6,8\}$ and class-separation $\in\!\{0.6,1.2,1.8\}$ (App.~\ref{app:exp-extra}, S8) confirms the synthetic findings below are not artifacts of a single configuration.

\paragraph{S2: Main attack--mechanism table.}
For the synthetic classification task, we compared four attribution mechanisms (baseline Shapley over submitted identities and quotient Shapley with three evidence layers: latent oracle, source ID provenance, and cosine $\theta\!=\!0.99$) across five attacks (honest baseline, pure Sybil split with $k\!=\!3$, exact-duplicate-with-Sybil-split, near-duplicate-with-Sybil-split, and a label-noise poisoning attack). All numbers are the mean manipulation gain $\pm$ SE over $50$ seeds.

\begin{table}[tbp]
\centering
\small
\caption{Manipulation gain $G_i$ on the synthetic classification task ($50$ seeds, mean $\pm$ SE). Lower is better; $G_i\!=\!1$ means the attack does not change total payment. Bold entries highlight where baseline Shapley fails. Bottom panel: quotient mechanisms over evidence-backed clusters, where bold entries indicate near-honest gain. \emph{Uniform} (each submitted ID gets $1/m$), \emph{Per-ex.}\ (each example gets $1/r$), \emph{LOO}, \emph{Shap}, \emph{Banz}\ (raw submitted-identity Banzhaf, not normalized), \emph{$\beta$-Shap}\ (Beta-Shapley, $\alpha\!=\!\beta\!=\!2$). Quotient: \emph{Sh.\,oracle} (latent-truth oracle), \emph{Bz.\,oracle} (raw quotient Banzhaf with latent oracle), \emph{Sh.\,source} (source ID provenance), \emph{Sh.\,cos.$0.99$} (cosine $\theta\!=\!0.99$). No normalized-Banzhaf column is reported.}\label{tab:main-table}

\textbf{Submitted-identity baselines.}\\[0.4ex]
\begin{tabular}{lcccccc}
\toprule
Attack & Uniform & Per-ex.\ & LOO & Shap & Banz & $\beta$-Shap \\
\midrule
honest & $1.000$ & $1.000$ & $1.000$ & $1.000$ & $1.000$ & $1.000$ \\
sybil $k\!=\!3$ & $2.40$ & $1.00$ & $0.41{\scriptstyle \pm 0.36}$ & $\mathbf{1.60{\scriptstyle \pm 0.02}}$ & $0.93{\scriptstyle \pm 0.02}$ & $1.41{\scriptstyle \pm 0.04}$ \\
exact dup.\,+\,sybil & $2.39$ & $1.77$ & $0.35{\scriptstyle \pm 0.46}$ & $\mathbf{1.74{\scriptstyle \pm 0.03}}$ & $0.70{\scriptstyle \pm 0.04}$ & $1.48{\scriptstyle \pm 0.05}$ \\
near dup.\,+\,sybil & $2.39$ & $1.77$ & $0.40{\scriptstyle \pm 0.47}$ & $\mathbf{1.74{\scriptstyle \pm 0.03}}$ & $0.70{\scriptstyle \pm 0.04}$ & $1.48{\scriptstyle \pm 0.05}$ \\
label noise & $0.98$ & $0.98$ & $-0.26{\scriptstyle \pm 0.41}$ & $0.57{\scriptstyle \pm 0.02}$ & $0.20{\scriptstyle \pm 0.03}$ & $0.45{\scriptstyle \pm 0.03}$ \\
\bottomrule
\end{tabular}

\vspace{0.6ex}
\textbf{Quotient mechanisms over evidence-backed clusters.}\\[0.4ex]
\begin{tabular}{lcccc}
\toprule
Attack & Sh.\,oracle & Bz.\,oracle & Sh.\,source & Sh.\,cos.$0.99$ \\
\midrule
honest & $1.000$ & $1.000$ & $1.000$ & $1.000$ \\
sybil $k\!=\!3$ & $\mathbf{1.00}$ & $\mathbf{1.00}$ & $1.61{\scriptstyle \pm 0.02}$ & $1.61{\scriptstyle \pm 0.02}$ \\
exact dup.\,+\,sybil & $\mathbf{0.96{\scriptstyle \pm 0.01}}$ & $0.87{\scriptstyle \pm 0.03}$ & $\mathbf{0.96{\scriptstyle \pm 0.01}}$ & $\mathbf{0.96{\scriptstyle \pm 0.01}}$ \\
near dup.\,+\,sybil & $\mathbf{0.96{\scriptstyle \pm 0.01}}$ & $0.87{\scriptstyle \pm 0.03}$ & $\mathbf{0.96{\scriptstyle \pm 0.01}}$ & $\mathbf{0.96{\scriptstyle \pm 0.01}}$ \\
label noise & $0.57{\scriptstyle \pm 0.02}$ & $0.20{\scriptstyle \pm 0.03}$ & $0.57{\scriptstyle \pm 0.02}$ & $0.57{\scriptstyle \pm 0.02}$ \\
\bottomrule
\end{tabular}
\end{table}

\paragraph{Reading the table.}
On replication attacks, every quotient mechanism with example-level evidence drops $G$ to the latent-oracle level $\approx\!0.96$, against $1.74$ for baseline Shapley. Pure splits are matched only by the latent oracle (Remark~\ref{rem:regimes}): source ID evidence handles copies or variants that retain a common source identifier, not disjoint source units split across pseudonyms, which explains its $G\!=\!1.61$ pure-split result. Raw Banzhaf has multiplicative gain below one ($G\!=\!0.93$ on Sybil $k\!=\!3$); raw quotient Banzhaf with latent oracle gets $G\!=\!1$ on Sybil and $0.87$ on duplicates, showing that quotienting can be paired with different semivalues. LOO is large and unstable on every Sybil attack. Split-factor $k\!\in\!\{2,3,4\}$ gives Shapley gains $1.37, 1.60, 1.78$, qualitatively matching $nk/(n+k\!-\!1)$. We use $s\!=\!256$ Shapley samples; budget details in App.~\ref{app:exp-extra}.

\paragraph{S4: Fairness--Sybil threshold frontier.}
Sweeping cosine threshold $\theta\!\in\!\{0.85,\ldots,0.99\}$ on near-duplicate Sybil ($\sigma\!=\!0.03$, $50$ seeds), manipulation gain transitions sharply from $1.246\!\pm\!0.052$ at $\theta\!=\!0.85$ to $0.964\!\pm\!0.009$ for $\theta\!\geq\!0.95$ (latent-oracle level), against baseline Shapley $1.738$; oracle $L^1$ fairness loss tracks the same frontier ($0.162\!\to\!0.015$). On \emph{pure-split} Sybil (no example overlap), no $\theta$ defends and a loose $\theta\!=\!0.85$ is actively counterproductive ($G\!=\!1.84$ against baseline $1.60$): false-merges between independent providers amplify Sybil shares, as Remark~\ref{rem:regimes} predicts. Full curve, second-axis pure-split data, and Figure~\ref{fig:frontier} in App.~\ref{app:exp-extra}.

\paragraph{Two further synthetic findings (App.~\ref{app:exp-extra}).}
S5 sweeps a $5\!\times\!5$ oracle-noise grid (false-split $p_{\mathrm{fs}}$, false-merge $p_{\mathrm{fm}}$, Table~\ref{tab:noise-grid}; both noise types are i.i.d.\ per edge --- a simplification: real clustering errors are correlated with provider size, embedding density, and label structure): under equal-share allocation, false-merges dominate ($G\!=\!2.46$ at $p_{\mathrm{fm}}\!=\!0.40$, worse than baseline) while $p_{\mathrm{fs}}$ alone only reaches $G\!=\!1.13$. The false-merge increase is an empirical allocator-leakage channel $A_i^\alpha>0$, not a regime satisfying Assumption~\ref{ass:false-name-neutral}. Fairness loss tracks $p_{\mathrm{fm}}$ almost independently of $p_{\mathrm{fs}}$, validating Theorem~\ref{thm:fairness}'s bound. S7 shows that the dominance is rule-specific: count-based or latent-share allocation more than halves the false-merge effect on near-dups Sybil and pushes pure-Sybil gain \emph{below} honest (Fig.~\ref{fig:allocation-sensitivity}). In this DGP the latter rules reduce $A_i^\alpha$ because ownership counts map cleanly to latent providers; count-based allocation is not generally neutral unless canonical counts are attack-invariant. Plotting empirical loss against the mixed-component fraction across both ablations confirms Theorem~\ref{thm:fairness}'s linear bound is conservative and saturates near $0.4$.

\subsection{Image- and text-domain experiments}\label{sec:exp-real}

Q1's synthetic game results show the mechanism behaves as theory predicts when ground-truth clusters are known. Q2 explores if the same defense holds in production-style ML pipelines, where evidence is from frozen feature extractors and the only oracle is cosine similarity over those features. We replicate the S2 mechanism set on CIFAR-10 with frozen ImageNet ResNet-18 features ($512$-d) and on AG News with frozen MiniLM-L6-v2 sentence embeddings ($384$-d), four providers $\times$ $50$ examples each, $10$ seeds, plus a cosine threshold sweep $\theta\!\in\!\{0.85,0.90,0.95,0.99\}$. In the near-duplicate Sybil attack with $\sigma\!=\!0.02$, baseline Shapley shows a manipulation gain $1.07$ on CIFAR and $1.72$ on text; quotient-Shapley with the latent oracle reduces both to $1.00$, confirming synthetic findings transfer. The full table is in App.~\ref{app:exp-extra}.

\paragraph{Threshold-reversal across domains, and its mechanism.}The cosine threshold sweep reveals a domain dependency: CIFAR's tighter $\theta$ saturates the defense ($\theta\!\geq\!0.90$: $G\!=\!1.00$), while in text the direction \emph{reverses} ($\theta\!\leq\!0.90$: $G\!=\!1.00$; $\theta\!\geq\!0.95$: $G\!=\!1.78$). The reversal is driven by which of the two lower-bound failure modes binds: \emph{pairwise over-merge} or \emph{chaining}. The upper bound, near-duplicate cosine $\cos(x,x\!+\!\eta)$ at $\sigma\!=\!0.02$ depends on feature norm — CIFAR's $\|x\|\!\approx\!28$ pushes it to $\approx\!0.9998$ (non-binding); unit-normalized MiniLM drops it to $0.92$. The pairwise lower bound is the intra-class cosine p90 (typical honest same-class pair) — $0.81$ for CIFAR (binding) and $0.26$ for MiniLM (non-binding). When pairwise is non-binding, chaining binds: sub-threshold cross-provider edges accumulate, collapsing the provider-level evidence graph into one giant component. We diagnose chaining by simulating the provider-level mixed-component fraction (MCF) on the embedding pool (matching the experimental graph protocol); the \emph{chaining floor} is the smallest $\theta$ at which the simulated MCF$<\!0.10$. The binding lower bound is the larger of the two; in all four domains, chaining binds: CIFAR $0.88$, AG~News $0.84$, IMDB $0.78$; STL-10 is a tie ($0.78$). The MCF simulation matches the experimental task partition (random for text and class-stratified for image); App.~\ref{app:exp-extra}. Pure-split Sybil on both domains is matched only by the latent oracle (Remark~\ref{rem:regimes}).

\paragraph{Calibration-domain and held-out-domain checks.}
The MCF cutoff ($0.10$) was calibrated against the original IMDB sweep at $\theta\!\in\!\{0.50, 0.70, 0.85, 0.90, 0.95\}$, so IMDB is a calibration domain rather than a held-out test. Its additional cells $\theta\!\in\!\{0.75,0.78,0.80,0.82\}$ provide retrospective sharpness checks around the calibrated floor: the predicted interval is $[0.78,0.92]$, and the added sweep reaches oracle level at $0.78$--$0.90$ while failing below and above that interval. STL-10 is the genuine held-out domain: its interval $[0.78,1.00]$ was computed before the attack-defense sweep, which gives $G\!=\!2.00$ at $\theta\!\leq\!0.65$, $1.70$ at $0.75$, and $0.999$ at $\theta\!\geq\!0.80$. The SE per cell over 10 seeds is $\leq\!0.01$ inside the admissible intervals and $\leq\!0.16$ in transition cells. These checks use embedding-pool geometry under the calibrated Gaussian-variant family, provider-partition simulation, connected-component graph, and MCF tolerance implemented by \texttt{scripts/predict\_theta.py}; they are not predictions from embedding statistics alone. Per-$\theta$ values appear in Table~\ref{tab:holdout}.

\paragraph{Pre-deployment admissibility check.}\new{The threshold-prediction machinery doubles as a deployment safeguard against the false-merge regime in which cosine-quotient underperforms baseline Shapley (S4). From an embedding pool and a prespecified variant family, form $[\theta_{\min},\theta_{\max}]$, where $\theta_{\min}$ is the larger of the pairwise-similarity p90 and the MCF chaining floor, and $\theta_{\max}$ is the near-duplicate-similarity p10. Use cosine quotienting only when this interval is nonempty with a useful margin and the simulated mixed-component fraction is below the chosen tolerance; otherwise use provenance-grade evidence or the identity-level baseline. This is an empirically calibrated diagnostic, not a theorem-level certificate. It concerns duplicate and variant evidence and does not solve pure splits without non-example-level evidence.}

%% file: sections/08-discussion.tex
\section{Discussion, limitations, and conclusion}\label{sec:discussion}

\textbf{The unit of attribution is a mechanism-design choice.} Data Shapley and its successors treat units as given; in markets, the apparent granularity is endogenous. Theorem~\ref{thm:impossibility} rules out exact Shapley + unrestricted FNP; quotient semivalues move the strategic burden to evidence-backed clusters, semivalue-agnostically. Two design levers matter: \emph{within-cluster allocation} (count-based over canonical units is preferred when canonical ownership counts are well defined and attack-invariant; equal-share is fragile under false-merges) and the \emph{cosine threshold} $\theta$ (diagnosed from pairwise p90, chaining-floor MCF, and near-duplicate p10 under the calibrated variant and graph model, Sec.~\ref{sec:exp-real}).\removed{ Complementarity with FGSV: cluster construction defeats content-level multiplication, and FGSV-style axioms defeat shell-company aggregation.}

\new{\textbf{Combining with faithfulness axioms.} For the Shapley case, a clean composition is available when quotient units have a fixed group assignment. First construct canonical quotient units using the evidence graph and representative operator, then compute each quotient unit's Shapley value $\phi_k(\bar v)$. For a fixed group $G$ of quotient units, define its FGSV payment as $P_G=\sum_{k\in G}\phi_k(\bar v)$. If $G=G_1\mathbin{\dot\cup}\cdots\mathbin{\dot\cup}G_h$, then $\sum_\ell P_{G_\ell}=P_G$, so repartitioning a fixed quotient dataset does not change the total. The quotient layer addresses changes to content units, subject to quotient stability or Theorem~\ref{thm:approx}'s leakage terms, while FGSV addresses regrouping of those fixed units. This composition is specific to Shapley/FGSV and is not automatic for arbitrary semivalues. Mixed-ownership clusters also require a valid within-cluster allocation rule; we do not claim an unrestricted composition theorem.}

\new{\textbf{Limitations.} Our scope is false-name and variant manipulation on a fixed underlying data pool. Poisoning, cost-truthfulness, and content-truthfulness are outside the mechanism's reach and need complementary instruments --- procurement design, audits, or legal enforcement; evidence from platform auditing shows that API restrictions can create audit blind spots even under formal transparency mandates \parencite{Burnat2026-xq}. The evidence graph is a two-sided design risk: weak graphs leak manipulation mass through escaped clusters ($L_i^\alpha$), while aggressive graphs falsely merge independent providers, can create allocation leakage ($A_i^\alpha$), and underpay them (S4--S7). Pure-split attacks without content overlap defeat example-level evidence entirely; source IDs cover them only when the split units retain common lineage, otherwise account-, provider-, or license-level evidence is needed (Remark~\ref{rem:regimes}). Provenance metadata strengthens evidence at a privacy cost to providers. The attack library is dual-use: the same primitives that stress-test mechanisms describe how to game deployed ones. Finally, our guarantees are conditional on the matching comparison, matched-cluster semivalue drift, allocation behavior, and estimator accuracy; each channel is explicit, but none is free. The attack library contains mechanism-aware templates rather than derived best responses: the theorems bound attacks satisfying their stated comparison conditions, while which attacks rational manipulators would choose under explicit payoffs and costs is an equilibrium layer left for future work.}

\new{\textbf{Conclusion.} Attribution units in data markets are strategic objects. We formalized false-name manipulation in ML data attribution, proved that exact reported-game Shapley fairness is incompatible with unrestricted false-name-proofness, and showed that semivalues over evidence-backed clusters restore exact false-name-proofness under quotient stability and allocator neutrality. Away from that limit, manipulation gain decomposes into allocation leakage, escaped mass, matched-cluster semivalue drift, and estimator error, while cluster-fairness loss has a separate quotient-distance bound. DataMarket-Gym shows how these channels behave across synthetic, image, and text domains. Under a calibrated variant, partition, and graph model, embedding-pool geometry identifies useful threshold ranges; STL-10 supplies the held-out test, while IMDB is a calibration-domain check. The unit of attribution, not the semivalue on top of it, is where false-name robustness is won or lost.}

%% file: sections/appendix-a-proofs.tex
\section{Proofs}\label{app:proofs}

\subsection{Proof of Theorem~\ref{thm:impossibility}}

Let $M$ be any mechanism satisfying exact reported-game Shapley fairness. Consider the honest two-player unanimity game $v$ in Example~\ref{ex:split}. Since $M$ is exactly Shapley-fair, $M_A(v)=1/2$. Now consider the split report $v'$ in which $A$ appears as two pseudonyms $A_1,A_2$ and the value is generated only by the grand coalition $\{A_1,A_2,B\}$. Exact reported-game Shapley fairness gives $M_{A_1}(v')=M_{A_2}(v')=M_B(v')=1/3$. The latent provider controlling $A_1$ and $A_2$ receives $2/3$, which is strictly larger than $1/2$. Hence $M$ is not false-name-proof. The counterexample is a finite monotone binary-valued game, so any class of games containing it inherits the impossibility.

\subsection{Proof of Proposition~\ref{prop:split-gain}}

In the honest two-player unanimity game $v$, $A$'s only positive marginal contribution is to the singleton $\{B\}$, of size $1$ in a $2$-player game; therefore, $\varphi_A^\omega(v)=\omega_{2,1}$. In the split game $v'$ with $k+1$ players, each pseudonym $A_j$ has a positive marginal contribution only to the coalition of size $k$ that contains all other players (without $A_j$), and the corresponding semivalue weight is $\omega_{k+1,k}$. Hence, $\varphi_{A_j}^\omega(v')=\omega_{k+1,k}$ and the latent provider's total payment is $k\,\omega_{k+1,k}$. The additive split-gain is the difference, and the multiplicative form follows immediately when $\omega_{2,1}\!>\!0$. The Shapley, raw-Banzhaf, and Beta-Shapley specializations follow by substituting weight formulas: Shapley $\omega_{k+1,k}\!=\!1/(k+1)$ and $\omega_{2,1}\!=\!1/2$ give $k/(k+1)-1/2=(k-1)/(2(k+1))$; raw Banzhaf $\omega_{k+1,k}\!=\!2^{-k}$ and $\omega_{2,1}\!=\!1/2$ give $k\,2^{-k}-1/2$, and Beta-Shapley substitution is direct.

\paragraph{$n$-player extension.}
For the $n$-player unanimity game $v_n$ with $v_n([n])\!=\!1$ and zero on proper subsets, every player's Shapley value is $1/n$. After provider $1$ splits into $k\!\geq\!2$ pseudonyms, the resulting $(n\!+\!k\!-\!1)$-player game is again unanimity (value $1$ requires every player), so each pseudonym receives Shapley value $1/(n\!+\!k\!-\!1)$ and the latent provider's total is $k/(n\!+\!k\!-\!1)$. The multiplicative split-gain is therefore $G_{\mathrm{Sh}}(n,k) = nk/(n\!+\!k\!-\!1)$, which exceeds $1$ for all $n,k\!\geq\!2$ and converges to $k$ as $n\!\to\!\infty$. This is the formula instantiated in App.~\ref{app:exp-extra} S1.

\subsection{Proof of Theorem~\ref{thm:exact}}

Let $S$ be an honest profile and $S^\alpha$ be a manipulated profile in which latent provider $i$ splits into pseudonyms $I_i^\alpha$. By assumption, the manipulation does not change the quotient clusters or quotient value function. Therefore for each cluster $C_k$,
\[
    \varphi_k^\omega(\bar v_\theta^\alpha)=\varphi_k^\omega(\bar v_\theta).
\]
The original payment to provider $i$ is
\[
    p_i^{Q,\omega}(S)=\sum_{k=1}^{K}a_{i,k}\varphi_k^\omega(\bar v_\theta).
\]
The total payment to its pseudonyms is
\[
    \sum_{j\in I_i^\alpha}p_j^{Q,\omega}(S^\alpha)
    =\sum_{j\in I_i^\alpha}\sum_{k=1}^{K}a_{j,k}^{\alpha}
      \varphi_k^\omega(\bar v_\theta^\alpha).
\]
Changing the order of summation and using quotient invariance gives
\[
    \sum_{k=1}^{K}\left(\sum_{j\in I_i^\alpha}a_{j,k}^{\alpha}\right)
      \varphi_k^\omega(\bar v_\theta).
\]
By false-name neutrality, the term in parentheses equals $a_{i,k}$ for every $k$. Thus, the total payment of the pseudonyms equals $p_i^{Q,\omega}(S)$, so the additive gain is zero.

\subsection{Proof of Theorem~\ref{thm:approx}}

For $k\in M_i^\alpha$, write $s_k=s_{i,k}^\alpha$, $\varphi_k^\alpha=\varphi_k^\omega(\bar v_\theta^\alpha)$, and $\varphi_h=\varphi_h^\omega(\bar v_\theta)$ for honest clusters. Because $\mu$ is a bijection onto $H_i$, the honest payment terms carrying provider $i$ are used exactly once. The raw estimated payment difference is
\[
\sum_{j\in I_i^\alpha}\widehat p_j^{Q,\omega}(S^\alpha)-\widehat p_i^{Q,\omega}(S)
=\sum_{k\in M_i^\alpha}\left(s_k\widehat\varphi_k^\alpha-a_{i,\mu(k)}\widehat\varphi_{\mu(k)}\right)
+\sum_{k\in E_i^\alpha}s_{i,k}^\alpha\widehat\varphi_k^\alpha.
\]
For a matched pair, add and subtract $s_k\varphi_k^\alpha$, $a_{i,\mu(k)}\varphi_k^\alpha$, and $a_{i,\mu(k)}\varphi_{\mu(k)}$. Since $s_k,a_{i,\mu(k)}\in[0,1]$,
\[
\left|s_k\widehat\varphi_k^\alpha-a_{i,\mu(k)}\widehat\varphi_{\mu(k)}\right|
\leq \left|s_k-a_{i,\mu(k)}\right|\left|\varphi_k^\alpha\right|
+\left|\varphi_k^\alpha-\varphi_{\mu(k)}\right|+2\eta.
\]
For an escaped cluster, $s_{i,k}^\alpha\leq1$ gives $|s_{i,k}^\alpha\widehat\varphi_k^\alpha|\leq|\varphi_k^\alpha|+\eta$. Summing and using $2|M_i^\alpha|+|E_i^\alpha|\leq2K_\alpha$ yields
\[
\left|\sum_{j\in I_i^\alpha}\widehat p_j^{Q,\omega}(S^\alpha)-\widehat p_i^{Q,\omega}(S)\right|
\leq A_i^\alpha+L_i^\alpha+D_i^\alpha+2K_\alpha\eta.
\]
The deterministic false-name-gain bound follows because the signed gain is at most its absolute value. Under the mean-absolute-error premise, take expectations before the final triangle inequality; the same calculation replaces each estimator error by its expected absolute value and gives the stated expectation bound. A union bound over honest and manipulated cluster estimates gives the stated Hoeffding rate.

\subsection{Lemma~\ref{lem:normalized}: normalized-payment bound}

\begin{lemma}[Normalized-payment bound]\label{lem:normalized}
Let $x_i$ and $x_i^\alpha$ be provider $i$'s honest raw aggregate payment and its manipulated aggregate payment, with $|x_i^\alpha-x_i|\leq B_i^\alpha$. Let $T_0,T_\alpha$ be the corresponding total raw-score sums and $V_0,V_\alpha$ the two grand-coalition values. Suppose
\[
T_0,T_\alpha\geq T_{\min}>0,\qquad |T_\alpha-T_0|\leq\Xi^\alpha,\qquad |V_\alpha-V_0|\leq\Delta_V^\alpha,
\]
and $|V_\alpha|\leq V_{\max}$. Then the normalized aggregate payments $\widetilde x_i=V_0x_i/T_0$ and $\widetilde x_i^\alpha=V_\alpha x_i^\alpha/T_\alpha$ satisfy
\[
\left|\widetilde x_i^\alpha-\widetilde x_i\right|
\leq \frac{V_{\max}}{T_{\min}}B_i^\alpha
+\frac{|x_i|}{T_{\min}}\Delta_V^\alpha
+\frac{|V_0|\,|x_i|}{T_{\min}^2}\Xi^\alpha.
\]
\end{lemma}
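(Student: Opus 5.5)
The plan is a three-term telescoping of the normalized difference, changing one factor at a time: first the raw payment, then the grand-coalition value, then the raw-score sum. Concretely, I would write
\[
\widetilde x_i^\alpha-\widetilde x_i
=\frac{V_\alpha}{T_\alpha}\bigl(x_i^\alpha-x_i\bigr)
+\frac{x_i}{T_\alpha}\bigl(V_\alpha-V_0\bigr)
+V_0x_i\Bigl(\frac{1}{T_\alpha}-\frac{1}{T_0}\Bigr),
\]
which is verified by expanding: the first two terms sum to $V_\alpha x_i^\alpha/T_\alpha - V_0x_i/T_\alpha$, and the third converts $V_0x_i/T_\alpha$ into $V_0x_i/T_0$. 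The order of substitution matters. The multiplier on the $B_i^\alpha$ term must involve $V_\alpha$, which is the quantity controlled by $V_{\max}$, rather than $V_0$.

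Next, apply the triangle inequality and bound each term separately. For the first term, use $|V_\alpha|\le V_{\max}$, $T_\alpha\ge T_{\min}$, and the hypothesis $|x_i^\alpha-x_i|\le B_i^\alpha$ to get $V_{\max}B_i^\alpha/T_{\min}$. For the second term, use $T_\alpha\ge T_{\min}$ and $|V_\alpha-V_0|\le\Delta_V^\alpha$ to get $|x_i|\Delta_V^\alpha/T_{\min}$. For the third term, write $1/T_\alpha-1/T_0=(T_0-T_\alpha)/(T_\alpha T_0)$. Then use positivity, $T_\alpha T_0\ge T_{\min}^2$, together with $|T_\alpha-T_0|\le\Xi^\alpha$ to get $|V_0||x_i|\Xi^\alpha/T_{\min}^2$. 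Summing the three bounds gives the stated inequality.

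There is no genuine obstacle here. The only step needing care is choosing the telescoping order so that each difference is paired with the quantity the hypotheses actually bound: $V_\alpha$ through $V_{\max}$, and the honest quantities $x_i$ and $V_0$ kept as they are. It is also worth noting that $T_0,T_\alpha\ge T_{\min}>0$ is exactly what keeps both denominators away from zero. Without that lower bound, normalization can amplify arbitrarily small raw changes, which is the failure the surrounding text warns about.

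Finally, I would remark how the lemma connects to Theorem~\ref{thm:approx}. When $x_i$ and $x_i^\alpha$ are the estimated quotient payments of that theorem, $B_i^\alpha$ can be taken to be exactly its bound $A_i^\alpha+L_i^\alpha+D_i^\alpha+2K_\alpha\eta$, which yields a normalized false-name-gain bound.
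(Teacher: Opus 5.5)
Your proposal is correct and matches the paper's proof. The paper adds and subtracts $V_\alpha x_i/T_\alpha$ and then splits $V_\alpha/T_\alpha-V_0/T_0$ as $(V_\alpha-V_0)/T_\alpha+V_0(T_0-T_\alpha)/(T_\alpha T_0)$, which is exactly your three-term telescoping, with the same term-by-term bounds.
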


\begin{proof}
Add and subtract $V_\alpha x_i/T_\alpha$. Then
\[
\left|\frac{V_\alpha x_i^\alpha}{T_\alpha}-\frac{V_0x_i}{T_0}\right|
\leq\frac{|V_\alpha|}{T_\alpha}|x_i^\alpha-x_i|+|x_i|\left|\frac{V_\alpha}{T_\alpha}-\frac{V_0}{T_0}\right|.
\]
The first term is at most $V_{\max}B_i^\alpha/T_{\min}$. For the second,
\[
\left|\frac{V_\alpha}{T_\alpha}-\frac{V_0}{T_0}\right|
\leq\frac{|V_\alpha-V_0|}{T_\alpha}+|V_0|\frac{|T_\alpha-T_0|}{T_\alpha T_0}
\leq\frac{\Delta_V^\alpha}{T_{\min}}+\frac{|V_0|\Xi^\alpha}{T_{\min}^2},
\]
which gives the result. The assumptions on $T_0,T_\alpha,\Xi^\alpha$, and $\Delta_V^\alpha$ are independent stability conditions; Theorem~\ref{thm:approx} alone does not imply them.
\end{proof}

\subsection{Proof of Theorem~\ref{thm:fairness}}

Fix a quotient cluster $k$ and any $Q\subseteq[K]\setminus\{k\}$. By assumption,
\[
    |v^{\mathrm{merge}}(Q\cup\{k\})-\bar v_\theta(Q\cup\{k\})|\leq\Delta_\theta
\]
and
\[
    |v^{\mathrm{merge}}(Q)-\bar v_\theta(Q)|\leq\Delta_\theta.
\]
Therefore, the difference between cluster $k$'s two marginal contributions at coalition $Q$ has absolute value at most $2\Delta_\theta$. Multiplying by the common non-negative weights $\omega_{K,|Q|}$ and summing over $Q$ preserves this bound because $\sum_{Q\subseteq[K]\setminus\{k\}}\omega_{K,|Q|}=1$.

\subsection{Observational ownership ambiguity and within-cluster fairness}

\begin{remark}[Observational ownership ambiguity]\label{rem:ownership-ambiguity}
    A mechanism that observes only submitted data and raw account labels receives the same report when one latent provider controls two observationally identical accounts and when two independent providers submit identical data. The report alone therefore cannot identify which ownership-sensitive allocation is appropriate. This is an identification limitation, not an impossibility between replication resistance and symmetry: a rule can symmetrically divide one cluster payment among identical accounts in both worlds. A stronger entitlement axiom would be needed for a formal incompatibility result.
\end{remark}

\begin{remark}[Within-cluster fairness is a separate object]\label{rem:within-cluster}
    Theorem~\ref{thm:fairness} compares cluster-level totals: player $k$'s semivalue in the merged honest game and the corresponding cluster's semivalue in the quotient game. It does not compare an individual honest unit's semivalue in the original game to the cluster's quotient semivalue. When two or more honest units are merged into one cluster, the original game contains both as players, whereas the quotient game contains only the cluster; an individual-level comparison double-counts the substitution effect. Recovering individual-level fairness inside a cluster therefore requires an additional within-cluster allocation rule (Definition~\ref{def:qsv}) and cannot be derived from quotienting alone.
\end{remark}

%% file: sections/appendix-b-algorithms.tex
\section{Additional algorithms}\label{app:algorithms}

\begin{algorithm}[tbp]
    \caption{Permutation estimator for quotient Shapley}
    \label{alg:perm}
    \begin{algorithmic}[1]
        \Require clusters $\C_\theta=\{C_1,\ldots,C_K\}$; quotient utility $\bar v$; samples $R$
        \State Initialize $\widehat\varphi_k\gets0$ for all $k$
        \For{$r=1$ to $R$}
            \State Draw a uniformly random permutation $\pi$ of $[K]$
            \State $Q\gets\emptyset$; $u_0\gets\bar v(\emptyset)$
            \For{$k$ in order $\pi$}
                \State $u_1\gets\bar v(Q\cup\{k\})$
                \State $\widehat\varphi_k\gets\widehat\varphi_k+(u_1-u_0)/R$
                \State $Q\gets Q\cup\{k\}$; $u_0\gets u_1$
            \EndFor
        \EndFor
        \State \Return $\widehat\varphi_1,\ldots,\widehat\varphi_K$
    \end{algorithmic}
\end{algorithm}

\begin{algorithm}[tbp]
    \caption{Random-subset estimator for quotient Banzhaf}
    \label{alg:banzhaf}
    \begin{algorithmic}[1]
    \Require clusters $\C_\theta=\{C_1,\ldots,C_K\}$; quotient utility $\bar v$; samples $R$
    \For{$k=1$ to $K$}
        \State $\widehat\varphi_k\gets0$
        \For{$r=1$ to $R$}
            \State Draw $Q\subseteq[K]\setminus\{k\}$ by including each cluster independently with probability $1/2$
            \State $\widehat\varphi_k\gets\widehat\varphi_k+\big(\bar v(Q\cup\{k\})-\bar v(Q)\big)/R$
        \EndFor
    \EndFor
    \State \Return $\widehat\varphi_1,\ldots,\widehat\varphi_K$
    \end{algorithmic}
\end{algorithm}

%% file: sections/appendix-c-benchmark.tex
\section{Benchmark implementation details}\label{app:gym}

\textsc{DataMarket-Gym} is the benchmark instantiated for every experiment in Sec.~\ref{sec:experiments}. It exposes the following Python interfaces:
{\small
\begin{verbatim}
ProviderMarket(dataset, provider_partition, learner, utility)
Attack.apply(provider_id, market_state) -> submitted_profile
EvidenceGraph.build(submitted_profile, oracle, threshold) -> clusters
Valuator.fit(clusters, utility_oracle, semivalue, budget) -> cluster_values
Allocator.allocate(cluster_values, provenance_policy) -> payments
Metrics.evaluate(payments, latent_owners, honest_reference) -> report
\end{verbatim}
}

\paragraph{Provider generation.}
Synthetic markets sample $n_{\mathrm{providers}}\!\in\!\{4,6,8\}$ providers, each owning $r/n$ examples drawn i.i.d.\ from a labeled-mixture distribution (synthetic) or a uniformly-random subset of a public benchmark (CIFAR-10, AG~News, STL-10, IMDB). Provider assignment is sequential after a permuted random selection from the training pool, which matches the protocol used in our chaining-floor simulation (Sec.~\ref{sec:exp-real}). Held-out experiments use $n\!=\!4$ providers $\times$ $50$ examples.

\paragraph{Attack library.}
Six attack families, each with a stable name and parameter list:
(i)~\texttt{honest} (no manipulation); (ii)~\texttt{exact\_dup\_2x\_sybils} (provider $0$ duplicates a fraction $f\!\in\![0,1]$ of its units across two pseudonyms); (iii)~\texttt{near\_duplicate\_2x\_sybils}\,(\,$\sigma$\,) (Gaussian noise variants at scale $\sigma\!\in\!\{0.02,0.03\}$ across two pseudonyms; payload-preserving when $\sigma\!=\!0$); (iv)~\texttt{sybil\_split\_k} (pure partition into $k\!\in\!\{2,3,4\}$ pseudonyms with no shared units, round-robin or block); (v)~\texttt{label\_noise} (poisoning baseline; flips $p\!\in\![0,0.5]$ of attacker labels --- out-of-scope for the formal guarantees but included as a stress test); (vi)~\texttt{provider\_zero\_attack} (the named attacker is provider $0$ throughout; baselines invariant under attacker identity).

\paragraph{Evidence graph and representative operator.}
The evidence layer $e_\theta$ is one of \texttt{none} (Shapley over submitted IDs, baseline), \texttt{oracle\_latent} (latent-truth ground truth), \texttt{oracle\_source} (source ID provenance), \texttt{exact\_hash} (perceptual hash to $8$ decimals), \texttt{cosine} (similarity at threshold $\theta$ over precomputed embeddings), \texttt{hybrid\_source\_cosine}, \texttt{hybrid\_hash\_cosine}, or \texttt{noisy\_oracle\_latent}\,(\,$p_{\mathrm{fs}},p_{\mathrm{fm}}$\,) (latent oracle with controlled false-split / false-merge noise; used in S5 grid). Source ID provenance links transformations retaining a common source unit; it does not link a pure split across distinct source units unless the identifier is defined at the provider or license level. The components of the resulting graph are the clusters $\C_\theta$. The representative operator $R_\theta$ defaults to exact-duplicate-collapse (paper Sec.~\ref{sec:quotient}); when duplicates arrive under several identities, count-based allocation additionally needs a fixed canonical-ownership rule rather than assigning the collapsed unit opportunistically. Capped, medoid, and weight-normalized variants are also available and toggled per experiment. The S5 oracle-noise grid uses identity $R_\theta$ to expose Theorem~\ref{thm:approx}'s drift and allocator-leakage channels as $p_{\mathrm{fs}},p_{\mathrm{fm}}$ vary.

\paragraph{Valuation estimator.}
Permutation-sampling Shapley (Algorithm~\ref{alg:perm}) and random-subset Banzhaf (Algorithm~\ref{alg:banzhaf}) over $K$ clusters. Sample budget defaults: $R\!=\!64$ for held-out experiments, $R\!=\!256$ for the S2 main table. Exact enumeration is used when $K\!\leq\!4$ (configurable via \texttt{exact\_n\_limit}). The reported runtime per cell at $4$ providers / $50$ examples is well under one minute on the CPU; the bottleneck is the underlying logistic regression utility fit, not the semivalue sampling. Clustering reduces sampling cost from $r$ raw units to $K$ clusters: at our default ($r\!=\!200, K\!=\!4$ for honest profiles, $K\!=\!6$ for $k\!=\!3$ Sybil attacks), the savings ratio is $\sim\!50\!\times$.

\paragraph{Within-cluster allocation per experiment.}
The S2 main table, IMDB calibration checks, and STL-10 held-out experiment use the \texttt{equal\_submitted} rule (uniform over submitted IDs in a cluster) as the analytic baseline. The S5 noise grid and S7 within-cluster ablation sweep all three rules listed in Table~\ref{tab:allocation-rules} (\texttt{equal\_submitted}, \texttt{count\_canonical}, \texttt{latent\_share}). The S2 Banzhaf columns are raw and explicitly unnormalized; no normalized submitted-identity Banzhaf column is reported. In unanimity games, raw Banzhaf scores remain non-negative while the \emph{additive split-gain} becomes negative for $k\geq3$ (Proposition~\ref{prop:split-gain}).

\paragraph{Utility metric, validation, preprocessing, hyperparameters.}
Utility $U$ is test accuracy on a fixed validation split ($n_{\mathrm{val}}\!=\!500$ for real-domain checks and $200$ for synthetic). The learner is logistic regression with \texttt{max\_iter}$=200$, \texttt{C}$=1.0$, and standardization enabled. Frozen-feature pipelines use \texttt{ResNet-18} (ImageNet pretrained) for images and \texttt{sentence-transformers/all-MiniLM-L6-v2} for text. No fine-tuning; embeddings are precomputed and cached. Random seeds are $\{0,\ldots,9\}$ for IMDB/STL-10, $\{0,\ldots,49\}$ for S2/S4, $\{0,\ldots,29\}$ for S5, and $\{0,\ldots,19\}$ for S7/S8.

\paragraph{Compute.}
All experiments were run on a single CPU node (Apple M-series or a comparable Linux workstation). Total reproduction time for the full paper: synthetic suite $\sim\!10$ min, real-data $\sim\!30$ min, held-out $\sim\!1$ h (including embedding computation). Embeddings were computed using \texttt{scripts/precompute\_*\_embeddings.py} (run automatically by each \texttt{make} target); the resulting \texttt{.npz} caches can be retained locally to bypass the dataset-download stage on subsequent runs.

The reproducibility \texttt{Makefile} exposes six targets that reproduce every empirical claim. They are \texttt{synthetic}, \texttt{real-data}, \texttt{holdout}, \texttt{predict-theta}, \texttt{figures}, and \texttt{test}.

%% file: sections/appendix-d-allocation.tex
\section{Within-cluster allocation rules}\label{app:allocation-rules}

\begin{table}[tbp]
\centering\small
\caption{When each within-cluster allocation rule is false-name-neutral. Failure modes correspond to the empirical rule-divergence in S5/S7 (App.~\ref{app:exp-extra}).}\label{tab:allocation-rules}
\begin{tabular}{@{}p{0.22\linewidth}p{0.36\linewidth}p{0.34\linewidth}@{}}
\toprule
Allocation rule & Neutral when & Fails when \\
\midrule
Equal-share over submitted IDs $a_{j,k}\!=\!1/|I_k|$ & cluster contains pseudonyms of a single latent provider, \emph{or} the split does not change the number of represented IDs in cluster $k$ & mixed clusters: latent provider $i$'s split into $h$ pseudonyms in a cluster with $g$ other IDs raises $i$'s cluster share from $1/(g+1)$ to $h/(g+h)$ \\[1ex]
Count-based over canonical units $a_{j,k}\!\propto\!|R_\theta(C_k)\cap S_j|$ & canonical-unit ownership counts are well defined and invariant under manipulation, using a fixed rule when several identities claim one collapsed duplicate & raw replication when $R_\theta$ is the identity, or ambiguous/adversarial ownership assignment after duplicate collapse \\[1ex]
Count-based over raw submitted units $a_{j,k}\!\propto\!|C_k\cap S_j|$ & pure partitions with conserved raw counts & replication or duplicate-with-Sybil attacks \\[1ex]
Latent-share $a_{j,k}\!\propto\!|\{u\in C_k: \text{latent}(u)\!=\!j\}|$ & reliable latent ownership/provenance signal available to the mechanism & provenance metadata noisy, missing, or attacker-controlled \\
\bottomrule
\end{tabular}
\end{table}

%% file: sections/appendix-e-results.tex
\section{Additional empirical results}\label{app:exp-extra}

This appendix contains the full versions of three synthetic experiments summarized in Sec.~\ref{sec:exp-synthetic}: the closed-form validation of Proposition~\ref{prop:split-gain}, the Monte~Carlo sample-budget ablation, and the oracle-noise grid with its empirical $\Delta_\theta$ vs.\ fairness-loss visualization. All numbers are the mean $\pm$ SE over the seed counts noted in each caption.

\paragraph{S4 extended: Fairness--Sybil threshold frontier.}
Sec.~\ref{sec:exp-synthetic} reports the headline numbers for the cosine-threshold sweep; the full curve and pure-split second axis are below. The frontier is the central empirical object of the paper because it relates the two error sources of Theorems~\ref{thm:approx} and~\ref{thm:fairness}: false splits that allow manipulation gain and false merges that distort honest payments and amplify Sybil shares. On a near-duplicate Sybil with $\sigma\!=\!0.03$, the gain transitions from $1.246\!\pm\!0.052$ at $\theta\!=\!0.85$ to $0.974\!\pm\!0.013$ at $\theta\!=\!0.90$ and saturates at the latent-oracle level $0.964\!\pm\!0.009$ for $\theta\!\geq\!0.95$; the oracle-$L^1$ fairness-loss term traces the same shape ($0.162\!\to\!0.015$). On a pure-split Sybil, the threshold sweep produces no defense at any $\theta$, and at $\theta\!=\!0.85$ the gain is \emph{higher} than baseline Shapley ($1.844\!\pm\!0.054$ vs.\ $1.603$): a poorly-tuned cosine evidence layer is worse than no defense at all because false-merges between independent providers create artificially large clusters that amplify within-cluster Sybil shares, consistent with the within-cluster allocation analysis below. Defending pure splits requires evidence outside the unit-pair graph (Remark~\ref{rem:regimes}); loosening $\theta$ on a content-similarity oracle is an actively counterproductive substitute.
\begin{figure}[tbp]
\centering
\includegraphics[width=0.65\textwidth]{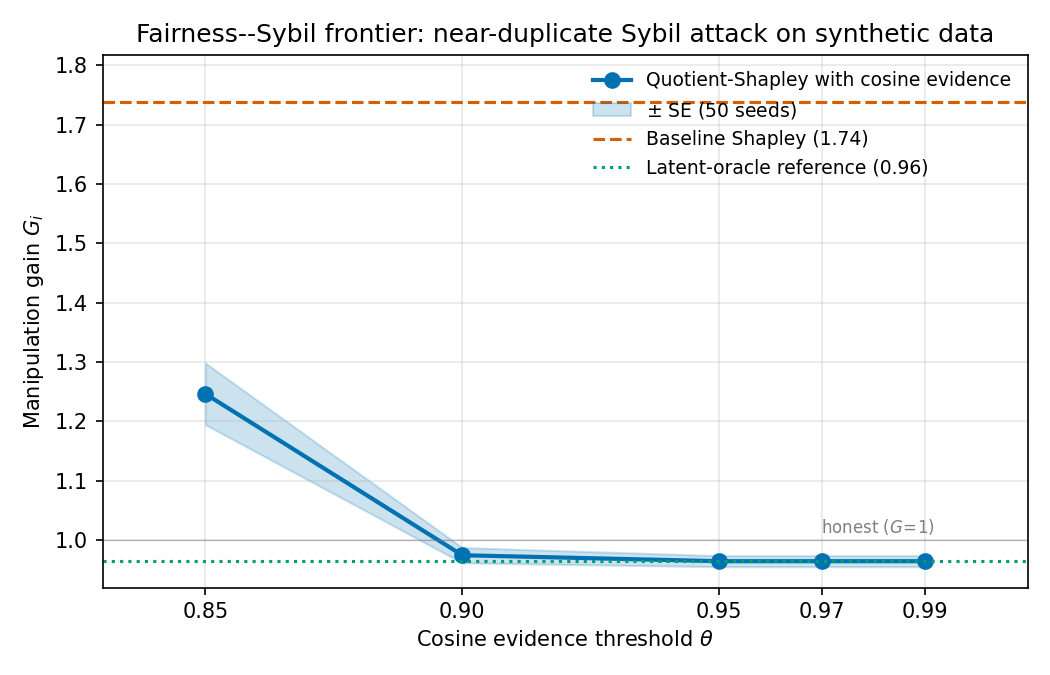}
\caption{Fairness--Sybil frontier: cosine threshold $\theta$ traded against manipulation gain on the near-duplicate-with-Sybil attack. Loose thresholds permit residual gain via false-splits; thresholds at or above $0.90$ saturate to the latent-oracle level. Mean over $50$ seeds; shaded band is $\pm$ SE. Baseline Shapley dashed reference at $1.74$. Note: the latent-oracle dotted reference at $G\!\approx\!0.96$ sits slightly below the honest line at $G\!=\!1$ because the oracle, like any quotient mechanism, accounts for one round of estimator noise; the two references should not be conflated.}\label{fig:frontier}
\end{figure}

\paragraph{S1: Closed-form validation of Proposition~\ref{prop:split-gain}.}
We instantiate unanimity games of size $n\!\in\!\{2,\ldots,6\}$ and apply a pure split of provider~$1$ into $k\!\in\!\{2,\ldots,6\}$ pseudonyms. Each $(n,k)$ cell is computed using exact enumeration. Across $50$ seeds, the measured Shapley multiplicative gain matches $\removed{\Gamma}\new{G}_{\mathrm{Sh}}(n,k)\!=\!nk/(n+k-1)$ to three decimal places (Table~\ref{tab:split-gain-validation}); raw-Banzhaf multiplicative gain matches $G_{\mathrm{Bz}}(k)=k/2^{k-1}$ identically. Random-monotone games show the same qualitative pattern: Shapley has positive additive gain, while raw Banzhaf has negative or zero additive gain for $k\!\geq\!3$ even though its scores remain non-negative.

\begin{table}[tbp]
\centering
\small
\caption{Closed-form multiplicative-gain predictions vs.\ measurements for Proposition~\ref{prop:split-gain} on unanimity games (mean $\pm$ SE over $50$ seeds, averaged across $n\!\in\!\{2,\ldots,6\}$). Predictions: Shapley $G_{\mathrm{Sh}}=nk/(n+k-1)$; raw Banzhaf $G_{\mathrm{Bz}}=k/2^{k-1}$.}\label{tab:split-gain-validation}
\begin{tabular}{llccccc}
\toprule
Mechanism & & $k\!=\!2$ & $k\!=\!3$ & $k\!=\!4$ & $k\!=\!5$ & $k\!=\!6$ \\
\midrule
\multirow{2}{*}{Shapley}     & predicted & $1.563$ & $1.939$ & $2.210$ & $2.417$ & $2.581$ \\
                             & measured  & $1.563{\scriptstyle \pm 0.009}$ & $1.939{\scriptstyle \pm 0.017}$ & $2.210{\scriptstyle \pm 0.024}$ & $2.417{\scriptstyle \pm 0.030}$ & $2.581{\scriptstyle \pm 0.035}$ \\
\midrule
\multirow{2}{*}{raw Banzhaf} & predicted & $1.000$ & $0.750$ & $0.500$ & $0.312$ & $0.187$ \\
                             & measured  & $1.000{\scriptstyle \pm 0.000}$ & $0.750{\scriptstyle \pm 0.000}$ & $0.500{\scriptstyle \pm 0.000}$ & $0.312{\scriptstyle \pm 0.000}$ & $0.187{\scriptstyle \pm 0.000}$ \\
\bottomrule
\end{tabular}
\end{table}

\paragraph{S3: Monte~Carlo sample budget (estimator stability for Theorem~\ref{thm:approx}).}
Sweeping the per-mechanism Shapley sample count $s\!\in\!\{64,128,256,512,1024\}$, the $L^1$ distance between the Shapley estimator's per-provider payments and a high-fidelity reference at $s\!=\!1024$ decays monotonically from $0.200$ at $s\!=\!64$ to $0.149$ at $s\!=\!1024$ on the synthetic classification task, with a clear knee at $s\!=\!256$ (additional samples reduce error by less than $0.02$ at $30\%$ extra runtime).\footnote{This metric measures Shapley estimator stability, not fairness loss to a latent-truth oracle. It is a proxy for the $\eta$ term in Theorem~\ref{thm:approx} only for Shapley-style mechanisms; we restrict S3 to those throughout.} The manipulation gain itself is stable across sample budgets to within $0.04$ ($1.69$--$1.72$ on \texttt{exact\_dup.+sybil}), confirming that $G_i$ is a structural quantity, not an estimator artefact. The latent-oracle quotient mechanism's estimator error is essentially flat at $0.015$ regardless of $s$: perfect clustering reduces effective player count and removes most sampling noise. We use $s\!=\!256$ as the default elsewhere.

\paragraph{S5: Oracle-noise grid (Theorems~\ref{thm:approx} and~\ref{thm:fairness}).}
We instantiate a noisy oracle that takes the latent-truth clustering and corrupts it with two independent error rates: a false-split rate $p_{\mathrm{fs}}$ (each true edge is removed with probability $p_{\mathrm{fs}}$) and a false-merge rate $p_{\mathrm{fm}}$ (each non-edge between distinct providers is added with probability $p_{\mathrm{fm}}$). Both rates sweep $\{0, 0.05, 0.10, 0.20, 0.40\}$, giving a $5\!\times\!5$ grid evaluated on the near-duplicate Sybil attack with $30$ seeds. Two patterns emerge under equal-share within-cluster allocation (Table~\ref{tab:noise-grid}). First, manipulation gain rises monotonically along both axes, but \emph{false-merge rate dominates}: at $p_{\mathrm{fs}}\!=\!0$, increasing $p_{\mathrm{fm}}$ from $0$ to $0.40$ pushes gain from $0.96$ to $2.46$ (worse than baseline Shapley); at $p_{\mathrm{fm}}\!=\!0$, increasing $p_{\mathrm{fs}}$ from $0$ to $0.40$ only moves gain to $1.13$. False merges create mixed clusters in which equal-share violates Assumption~\ref{ass:false-name-neutral}; the resulting increase is captured by $A_i^\alpha>0$ in Theorem~\ref{thm:approx}. In contrast, false splits primarily create escaped mass $L_i^\alpha$. Second, the cluster-level fairness-loss term (oracle $L^1$ distance) tracks $p_{\mathrm{fm}}$ almost independently of $p_{\mathrm{fs}}$ ($0.015\!\to\!0.32$ as $p_{\mathrm{fm}}\!:\!0\!\to\!0.40$), validating the form of Theorem~\ref{thm:fairness}'s bound. The asymmetry is rule-specific (S7 below).

\begin{table}[tbp]
\centering
\small
\caption{Manipulation gain on near-duplicate Sybil attacks under a noisy provenance oracle. Rows: false-split rate $p_{\mathrm{fs}}$. Columns: false-merge rate $p_{\mathrm{fm}}$. Cell $(0,0)$ is the clean oracle. Bold cells exceed baseline Shapley ($G\!=\!1.71$ on this attack): the defense is worse than no defense.}\label{tab:noise-grid}
\begin{tabular}{l|ccccc}
\toprule
$p_{\mathrm{fs}}\!\setminus\!p_{\mathrm{fm}}$ & $0$ & $0.05$ & $0.10$ & $0.20$ & $0.40$ \\
\midrule
$0$ & $0.958$ & $1.202$ & $1.426$ & $\mathbf{1.886}$ & $\mathbf{2.457}$ \\
$0.05$ & $0.958$ & $1.215$ & $1.457$ & $\mathbf{1.983}$ & $\mathbf{2.460}$ \\
$0.10$ & $0.972$ & $1.242$ & $1.460$ & $\mathbf{1.995}$ & $\mathbf{2.441}$ \\
$0.20$ & $1.015$ & $1.248$ & $1.473$ & $\mathbf{1.981}$ & $\mathbf{2.457}$ \\
$0.40$ & $1.131$ & $1.356$ & $1.587$ & $\mathbf{2.028}$ & $\mathbf{2.497}$ \\
\bottomrule
\end{tabular}
\end{table}

\paragraph{S6: Empirical $\Delta_\theta$ vs.\ fairness loss.}
Theorem~\ref{thm:fairness} bounds cluster-level fairness loss by $2\Delta_\theta$. Using the mixed-component fraction (the share of attribution clusters that mix multiple latent owners) as an empirical $\Delta_\theta$ proxy and plotting it against oracle $L^1$ payment distance across S4 (cosine threshold) and S5 (noise grid) on the near-duplicate Sybil attack (Figure~\ref{fig:empirical-delta-theta}), measured loss stays well below the linear bound $L^1\!\leq\!2\Delta_\theta$ in every cell, rising with cluster mixing but saturating near $0.4$ as $\Delta_\theta\!\to\!1$. Theorem~\ref{thm:fairness}'s bound is therefore conservative on this DGP.

\begin{figure}[tbp]
\centering
\includegraphics[width=0.65\textwidth]{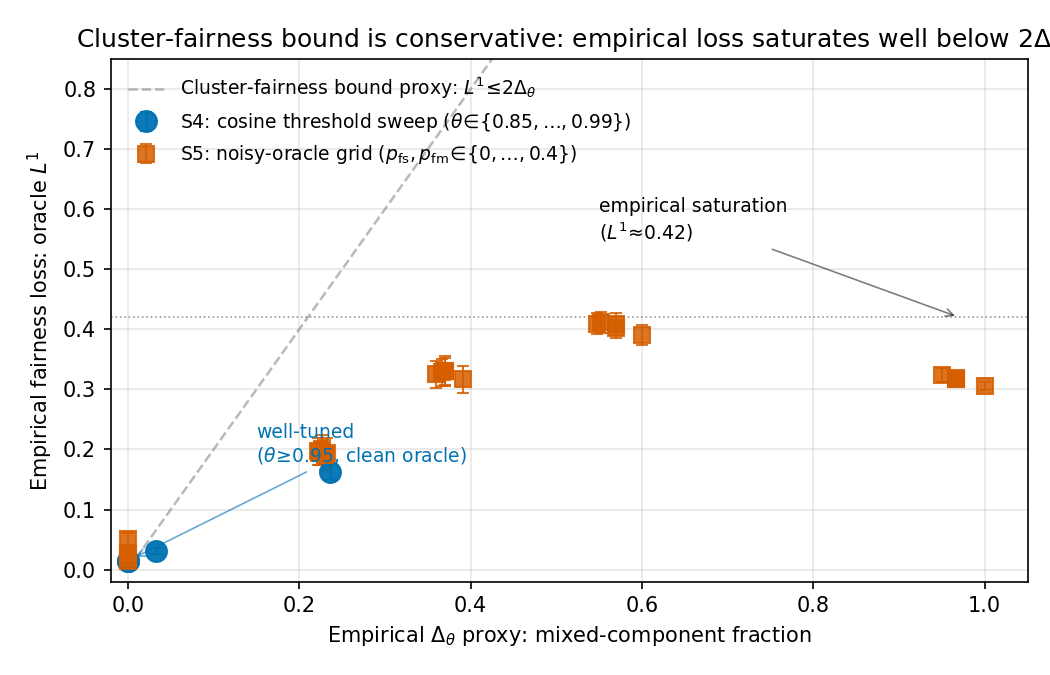}
\caption{Empirical $\Delta_\theta$ proxy (mixed-component fraction) vs.\ empirical fairness loss (oracle $L^1$ error) across S4 and S5, near-duplicate Sybil attack.}\label{fig:empirical-delta-theta}
\end{figure}

\paragraph{S7: Within-cluster allocation rule changes the false-merge channel.}
S5 assumes equal-share allocation. We compare it with \emph{count-based} allocation over canonical ownership counts and \emph{latent-share} allocation. Re-running the noise grid with all three rules at twenty seeds reverses the headline (Figure~\ref{fig:allocation-sensitivity}). At $p_{\mathrm{fm}}\!=\!0.20,p_{\mathrm{fs}}\!=\!0$ on a near-duplicate Sybil, equal-share yields $G\!=\!1.80\!\pm\!0.04$; count-based and latent-share both yield $G\!=\!1.38\!\pm\!0.04$, a $23\%$ reduction. On pure Sybil, equal-share increases to $1.82$ but count-based decreases to $0.84$, below the honest level. The two non-equal rules coincide numerically because each submitted identity in this DGP maps cleanly to one latent owner, making the relevant ownership counts well defined and invariant. The result favors count-based allocation under those conditions; it does not establish a universal production default.

\begin{figure}[tbp]
\centering
\includegraphics[width=0.95\textwidth]{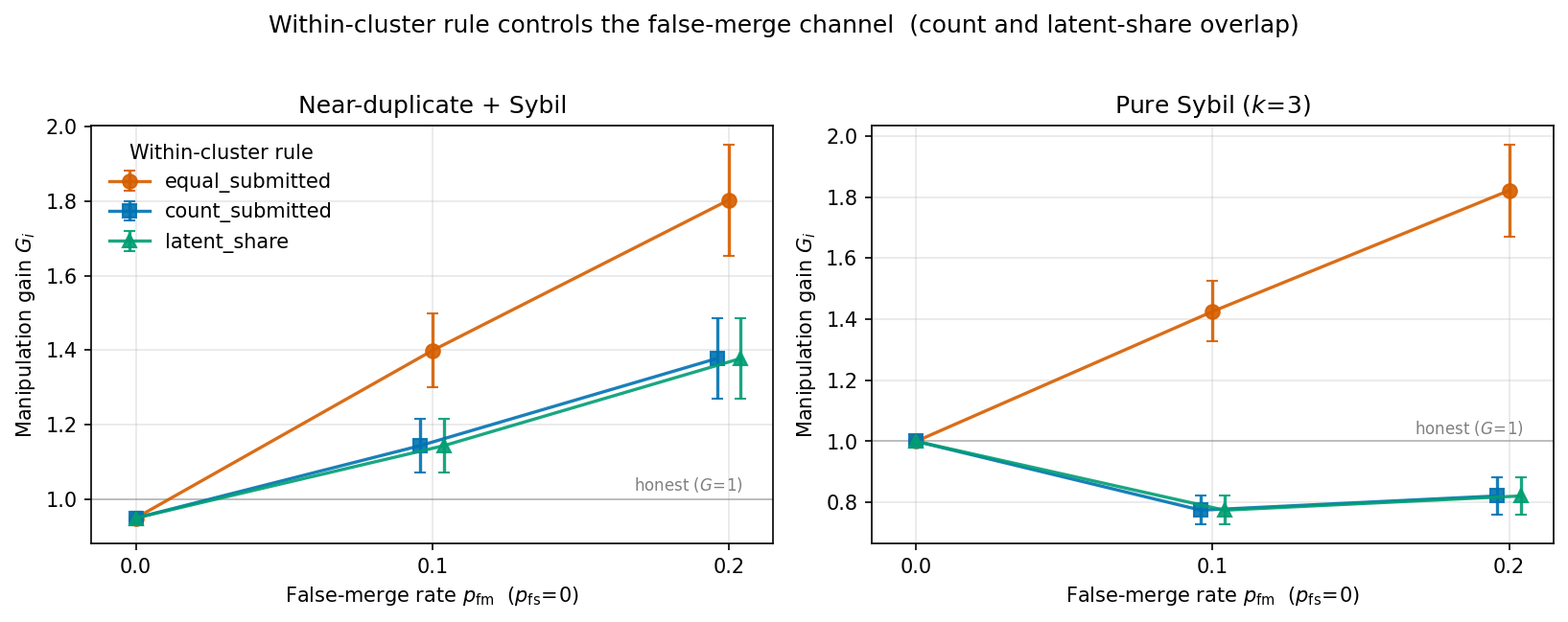}
\caption{Manipulation gain on near-duplicate Sybil and pure Sybil $k\!=\!3$ under three within-cluster allocation rules, as $p_{\mathrm{fm}}$ varies with $p_{\mathrm{fs}}\!=\!0$. Right panel: count- and latent-share rules push pure-Sybil gain \emph{below} the honest level ($G\!<\!1$) at $p_{\mathrm{fm}}\!>\!0$ because false-merges with attacker pseudonyms dilute the latent provider's share within mixed clusters.}\label{fig:allocation-sensitivity}
\end{figure}

\paragraph{S8: DGP robustness across $n_{\mathrm{providers}}$ and class separation.}
We sweep $n_{\mathrm{providers}}\!\in\!\{6,8\}$ and class-separation $\in\!\{0.6,1.2,1.8\}$ on the synthetic classification task, holding $20$ seeds and the four core attacks fixed. Baseline-Shapley gain on \texttt{exact\_dup\_2x\_sybils} stays in $[1.69,1.81]$ across all six cells; quotient-Shapley with the latent oracle stays in $[0.92,0.99]$; quotient-Shapley with cosine evidence at $\theta\!=\!0.99$ matches the latent-oracle level within $\pm 0.01$ in every cell. On \texttt{sybil\_k3} the oracle-latent quotient is exactly $1.000$ across every cell, while baseline Shapley sits in $[1.50,1.76]$. The headline finding (quotient with example-level evidence eliminates the replication-Sybil channel; only the latent oracle handles pure splits) is therefore robust to provider count and coalition complementarity in this regime. Cells at $n_{\mathrm{providers}}\!=\!12$ exceeded the $2$-hour task walltime owing to exact-enumeration cost and are not reported; sampling-based estimation would extend the grid further.

\paragraph{Embedding geometry behind the threshold reversal.}
Table~\ref{tab:embed-geom} reports the three domain constants that determine the admissible cosine threshold (Sec.~\ref{sec:exp-real}). Statistics are computed on the full training pool ($50{,}000$ images, $20{,}000$ AG News articles), with near-duplicate cosines from $5{,}000$ sampled vectors against $\mathcal{N}(0,\sigma^2 I)$ noise at $\sigma\!=\!0.02$ and the chaining floor from a $30$-trial provider-level mixed-component-fraction simulation matching the experimental graph protocol. The pairwise floor (intra-class cosine p90) is binding when feature norms are sufficiently large, resulting in a low cross-provider density at the working $\theta$; the chaining floor is binding when unit-normalized embeddings produce sufficient cross-provider density to collapse the provider-level evidence graph at sub-threshold edges. The near-duplicate p10 is the universal upper bound. The script that computes both floors and the upper bound from cached embeddings is \texttt{scripts/predict\_theta.py}.

\paragraph{Why two floors? Pairwise vs chaining intuitively.}
The lower bound on $\theta$ has to defeat two distinct over-merge mechanisms. \emph{Pairwise over-merge} fails locally: a single same-class honest pair has cosine $\geq\!\theta$ and gets glued into one cluster, distorting the cluster total. The fix is $\theta\!>$ the typical honest same-class pair cosine, i.e.\ intra-class p90. \emph{Chaining} fails globally: each individual cross-provider pair may have cosine just below $\theta$, but with $r/n$ units per provider and a non-trivial cross-provider edge density, transitive closure of the evidence graph can still link all $n$ providers into one giant component. The fix is $\theta\!>$ the value at which the simulated cross-provider edge density drops low enough that the provider-level graph stays disconnected --- operationalized as the smallest $\theta$ at which the simulated provider-level MCF falls below $0.10$. The MCF$<\!0.10$ cutoff is calibrated against the original IMDB attack-defense sweep at $\theta\!\in\!\{0.50, 0.70, 0.85, 0.90, 0.95\}$. Sensitivity to the cutoff choice (Table~\ref{tab:cutoff-sens}): the MCF curves transition sharply enough that predicted floors are stable to within $\pm 0.03$ across cutoff $\in\![0.05, 0.20]$ on CIFAR/STL-10/IMDB and $\pm 0.06$ on AG~News, with the $0.10$ central choice giving $0.88$ (CIFAR), $0.78$ (STL-10), $0.84$ (AG~News), $0.78$ (IMDB).

\begin{table}[tbp]
\centering\small
\caption{Chaining floor as a function of the MCF cutoff. Sharp curve transitions keep the calibrated rule stable across development, calibration, and held-out domains.}\label{tab:cutoff-sens}
\begin{tabular}{lcccc}
\toprule
Domain & cutoff $0.05$ & cutoff $0.10$ & cutoff $0.15$ & cutoff $0.20$ \\
\midrule
CIFAR-10 (ResNet-18) & $0.88$ & $\mathbf{0.88}$ & $0.86$ & $0.86$ \\
STL-10 (ResNet-18)   & $0.80$ & $\mathbf{0.78}$ & $0.78$ & $0.78$ \\
AG~News (MiniLM)     & $0.88$ & $\mathbf{0.84}$ & $0.80$ & $0.76$ \\
IMDB (MiniLM)        & $0.80$ & $\mathbf{0.78}$ & $0.76$ & $0.74$ \\
\bottomrule
\end{tabular}
\end{table} In all four of our domains chaining binds (CIFAR/AG~News/IMDB strictly, STL-10 in a tie with the pairwise floor). The simulation matches the experimental task partition --- class-stratified providers (each provider holds one class) for image tasks per \texttt{tasks/cifar.py}, and random partitioning for text tasks per \texttt{tasks/text.py} --- which is essential: under random partition CIFAR's chaining floor is artificially inflated by intra-class cross-provider pairs that the experiment never produces.
\begin{table}[tbp]
\centering
\small
\caption{Embedding-geometry quantities driving the cosine-threshold reversal. CIFAR-10 and AG~News are development domains; IMDB calibrates the MCF cutoff and supplies retrospective checks; STL-10 is the held-out validation domain. Chaining floor: smallest $\theta$ at which the simulated provider-level mixed-component fraction (MCF) drops below $0.10$.}\label{tab:embed-geom}
\resizebox{\textwidth}{!}{%
\begin{tabular}{lcccccc}
\toprule
Domain & $\|x\|$ (med) & pairwise floor & chaining floor & near-dup p10 & binding floor & admissible $\theta$ \\
\midrule
\multicolumn{7}{l}{\emph{Development domains (rule construction)}}\\
CIFAR-10 (ResNet-18, 512-d) & $27.9$ & $0.81$ & $0.88$ & $0.9998$ & $0.88$ (chaining) & $[0.88,\,1.00]$ \\
AG~News (MiniLM, 384-d)     & $1.00$ & $0.26$ & $0.84$ & $0.9246$ & $0.84$ (chaining) & $[0.84,\,0.92]$ \\
\midrule
\multicolumn{7}{l}{\emph{Calibration domain (retrospective sharpness check)}}\\
IMDB (MiniLM, 384-d)        & $1.00$ & $0.44$ & $0.78$ & $0.9244$ & $0.78$ (chaining) & $[0.78,\,0.92]$ \\
\midrule
\multicolumn{7}{l}{\emph{Held-out validation domain}}\\
STL-10 (ResNet-18, 512-d)   & $27.1$ & $0.78$ & $0.78$ & $0.9998$ & $0.78$ (tie) & $[0.78,\,1.00]$ \\
\bottomrule
\end{tabular}%
}
\end{table}

\paragraph{Calibration and held-out manipulation gain at each tested $\theta$.}
Table~\ref{tab:holdout} reports the cosine-threshold sweep on a near-duplicate Sybil attack for the IMDB calibration check and STL-10 held-out validation. Bold cells fall inside the admissible interval in Table~\ref{tab:embed-geom}. IMDB's added cells retrospectively locate the chaining transition around $0.78$; STL-10 tests the rule out of domain and recovers near-oracle gain above its predicted floor.
\begin{table}[tbp]
\centering
\small
\caption{Manipulation gain $G_i$ on the near-duplicate Sybil attack ($10$ seeds, mean; SE $\leq\!0.16$): IMDB is a calibration-domain check and STL-10 is held out. Bold cells fall inside the admissible interval in Table~\ref{tab:embed-geom}; em-dash entries were not run.}\label{tab:holdout}
\resizebox{\textwidth}{!}{%
\begin{tabular}{l@{\;\;}ccccccccccc@{\;}cc}
\toprule
& \multicolumn{11}{c}{cosine-threshold $\theta$} & oracle & baseline\\
\cmidrule(lr){2-12}
& $0.30$ & $0.50$ & $0.65$ & $0.70$ & $0.75$ & $0.78$ & $0.80$ & $0.82$ & $0.85$ & $0.90$ & $0.95$ & latent & Shap.\\
\midrule
IMDB (MiniLM, $[0.78,0.92]$)
& --- & $2.02$ & --- & $1.38$ & $1.20$ & $\mathbf{1.01}$ & $\mathbf{1.01}$ & $\mathbf{1.01}$ & $\mathbf{1.01}$ & $\mathbf{1.01}$ & $1.92$ & $1.01$ & $1.90$\\
STL-10 (ResNet-18, $[0.78,1.00]$)
& $2.00$ & $2.00$ & $2.00$ & $1.95$ & $1.70$ & --- & $\mathbf{1.00}$ & --- & $\mathbf{1.00}$ & $\mathbf{1.00}$ & $\mathbf{1.00}$ & $1.00$ & $1.00$\\
\bottomrule
\end{tabular}%
}
\end{table}

\paragraph{S9: Exploratory scale check on text at $n\!=\!8$, $100$ examples per provider.} One AG~News seed at $n_{\mathrm{providers}}\!=\!8$ and $100$ examples each took $2$h\,$19$m; the corresponding CIFAR run exceeded the $4$h walltime and is incomplete. In the completed text run, baseline Shapley gives $G\!=\!3.30$ on near-duplicate Sybil and $3.96$ on pure Sybil $k\!=\!3$, while latent-oracle quotient Shapley gives $0.99$ and $1.00$. Cosine $\theta\!\leq\!0.90$ gives $0.99$, whereas $\theta\!\geq\!0.95$ gives $4.08$. This single seed is compatible with the four-provider pattern but cannot establish scale robustness or cross-domain transfer; it is exploratory and is not used as evidence for the conclusion.

\paragraph{Planned ablations.} Beyond what is reported above, the benchmark supports: (i)~oracle-quality grids beyond the $5\!\times\!5$ used for S5; (ii)~quotient-Banzhaf and quotient-Beta-Shapley sweeps under the same evidence layers, to confirm that quotienting (not the semivalue) is the main source of Sybil resistance; (iii)~provider-granularity comparisons (example-level, batch-level, provider-level units); (iv)~negative-contribution attacks with mislabeled or poisoned data, evaluating whether non-negative payment constraints hide harms.

%% file: sections/appendix-f-reproducibility.tex
\section{Reproducibility checklist notes}\label{app:reproducibility}

\begin{itemize}[leftmargin=*]
    \item \textbf{Code:} Reproducibility artifact at \url{https://anonymous.4open.science/r/neurips-2026-quotient-semivalues-artifact-7F63/}. The artifact contains the \texttt{datamarket-gym} benchmark, all experiment configs (synthetic, CIFAR-10/AG~News development domains, IMDB calibration checks, STL-10 held-out validation), embedding pre-computation and geometry-analysis scripts, unit tests, a pinned \texttt{uv} environment lockfile, and \texttt{Makefile} targets \texttt{synthetic}, \texttt{real-data}, \texttt{holdout}, \texttt{predict-theta}, \texttt{figures}, \texttt{test} that reproduce every empirical claim in the paper.
    \item \textbf{Data:} All experiments use public benchmarks whose license and usage terms were checked before artifact release.
    \item \textbf{Compute:} Appendix~\ref{app:gym} reports hardware, valuation sample counts, and total wall-clock time.
    \item \textbf{Randomness:} Experiments use fixed seeds for partitioning, attacks, training, and valuation; Appendix~\ref{app:gym} lists them.
    \item \textbf{Negative results:} Sections~\ref{sec:exp-synthetic}--\ref{sec:exp-real} report thresholds where quotienting over-merges independent contributors.
\end{itemize}